\numberwithin{equation}{section}
\begin{document}

\begin{frontmatter}

\title{Coalgebraic Tools for\\Randomness-Conserving Protocols}

\author{Dexter Kozen}
\ead{kozen@cs.cornell.edu}

\author{Matvey Soloviev}
\ead{msoloviev@cs.cornell.edu}

\address{Cornell University}

\begin{abstract}
We propose a coalgebraic model for constructing and reasoning about state-based protocols that implement efficient reductions among random processes. We provide basic tools that allow efficient protocols to be constructed in a compositional way and analyzed in terms of the tradeoff between state and loss of entropy. We show how to use these tools to construct various entropy-conserving reductions between processes.
\end{abstract}

\begin{keyword}
Randomness\sep entropy\sep protocol\sep reduction\sep transducer\sep coalgebra
\end{keyword}

\end{frontmatter}


\section{Introduction}
\label{sec:intro}

\noindent
In low-level performance-critical computations---for instance, data-forwarding devices in packet-switched networks---it is often desirable to minimize local state in order to achieve high throughput. But if the situation requires access to a source of randomness, say to implement randomized routing or load-balancing protocols, it may be necessary to convert the output of the source to a form usable by the protocol. As randomness is a scarce resource to be conserved like any other, these conversions should be performed as efficiently as possible and with a minimum of machinery.

In this paper we propose a coalgebraic model for constructing and reasoning about state-based protocols that implement efficient reductions among random processes. By ``efficient'' we mean with respect to loss of entropy. Entropy is a measure of the amount of randomness available in a random source. For example, a fair coin generates entropy at the rate of one random bit per flip; a fair six-sided die generates entropy at the rate of $\log 6\approx 2.585$ random bits per roll. We view randomness as a limited computational resource to be conserved, like time or space.

Unfortunately, converting from one random source to another generally involves a loss of entropy, as measured by the ratio of the rate of entropy produced to the rate of entropy consumed. This quantity is called the \emph{efficiency} of the conversion protocol. For example, if we wish to simulate a coin flip by rolling a die and declaring heads if the number on the die is even and tails if it is odd, then the ratio of entropy production to consumption is $1/2.585 \approx .387$, so we lose about .613 bits of entropy per trial. The efficiency cannot exceed the information-theoretic bound of unity, but we would like it to be as close to unity as can be achieved with simple state-based devices. For example, we could instead roll the die and if the result is 1, 2, 3, or 4, output two bits 00, 01, 10, or 11, respectively---the first bit can be used now and the second saved for later---and if the result is 5 or 6, output a single bit 0 or 1, respectively. The efficiency is much better, about .645.

In this paper we introduce a coalgebraic model for the analysis of reductions between discrete processes. A key feature of the model is that it facilitates compositional reasoning. In \S3 we prove several results that show how the efficiency and state complexity of a composite protocol depend on the same properties of its constituent parts. This allows efficient protocols to be constructed and analyzed in a compositional way. We are able to cover a full range of input and output processes while preserving asymptotic guarantees about the relationship between memory use and conservation of entropy.

In \S4 we use the model to construct the following reductions between processes, where $k$ is a tunable parameter roughly proportional to the logarithm of the size of the state space:
\begin{itemize}
\item
$d$-uniform to $c$-uniform with efficiency $1 - \Theta(k^{-1})$;
\item
$d$-uniform to arbitrary rational with efficiency $1 - \Theta(k^{-1})$;
\item
$d$-uniform to arbitrary with efficiency $1 - \Theta(k^{-1})$;
\item
arbitrary to $c$-uniform with efficiency $1 - \Theta(\log k/k)$;
\item
$(1/r,(r-1)/r)$ to $c$-uniform with efficiency $1 - \Theta(k^{-1})$.
\end{itemize}
Thus choosing a larger value of $k$ (that is, allowing more state) results in greater efficiency, converging to the optimal of 1 in the limit. Here ``$d$-uniform'' refers to an independent and identically distributed (\iid) process that produces a sequence of letters from an alphabet of size $d$, each chosen independently with uniform probability $1/d$; ``arbitrary rational'' refers to an \iid\ process with arbitrary rational probabilities; and ``arbitrary'' refers to an \iid\ process with arbitrary real probabilities. In the last item, the input distribution is a coin flip with bias $1/r$. The notation $\Theta(\cdot)$ is the usual notation for upper and lower asymptotic bounds. These results quantify the dependence of efficiency on state complexity and give explicit bounds on the asymptotic rates of convergence to the optimal.

\subsection{Related Work}

Since von Neumann's classic paper showing how to simulate a fair coin with a coin of unknown bias \cite{vonNeumann51}, many authors have studied variants of this problem. Our work is heavily inspired by the work of Elias \cite{Elias72}, who studies entropy-optimal generation of uniform distributions from known sources. The definition of conservation of entropy is given there.

Mossel, Peres, and Hillar \cite{MosselPeresHillar05} show that there is a finite-state protocol to simulate a $q$-biased coin with a $p$-biased coin when $p$ is unknown if and only if $q$ is a rational function of $p$.

Peres \cite{Peres92} shows how to iterate von Neumann's procedure for producing a fair coin from a biased coin to approximate the entropy bound. Blum \cite{Blum85} shows how to extract a fair coin from a Markov chain. 

Another line of work by Pae and Loui \cite{PaeLoui06,PaeLoui05,Pae05} focuses on emitting samples from a variety of rational distributions given input from an unknown distribution, as in von Neumann's original problem. In \cite{PaeLoui06}, the authors introduce a family of von-Neumann-like protocols that approach asymptotic optimality as they consume more input symbols before producing output, and moreover can be shown to be themselves optimal among all such protocols. 

In \cite{DBLP:journals/tit/HanH97}, Han and Hoshi present a family of protocols for converting between arbitrary known input and output distributions, based on an interval-refinement approach. These protocols exhibit favorable performance characteristics and are comparable to the ones we present according to multiple metrics, but require an infinite state space to implement.

Finally, there is a large body of related work on extracting randomness from weak random sources (e.g.\ \cite{Nisan96,Nisan99,Ta-shma96,SrinivasanZuckerman99,Dodis04}). These models typically work with imperfect knowledge of the input source and provide only approximate guarantees on the quality of the output. Here we assume that the statistical properties of the input and output are known completely, and simulations must be exact.

\section{Definitions}
\label{sec:def}

A \emph{(discrete) random process} is a finite or infinite sequence of discrete random variables. We will view the process as producing a stream of letters from some finite alphabet $\Sigma$. We will focus mostly on \emph{independent and identically distributed (\iid) processes}, in which successive letters are generated independently according to a common distribution on $\Sigma$.

Informally, a \emph{reduction} from a random process $X$ with alphabet $\Sigma$ to another random process $Y$ with alphabet $\Gamma$ is a deterministic protocol that consumes a stream of letters from $\Sigma$ and produces a stream of letters from $\Gamma$. To be a valid reduction, if the letters of the input stream are distributed as $X$, then the letters of the output stream must be distributed as $Y$. In particular, for \iid\ processes $X$ and $Y$ in which the letters are generated independently according to distributions $\mu$ on $\Sigma$ and $\nu$ on $\Gamma$, respectively, we say that the protocol is a reduction from $\mu$ to $\nu$. Most (but not all) of the protocols considered in this paper will be finite-state.

To say that the protocol is \emph{deterministic} means that the only source of randomness is the input process. It makes sense to talk about the expected number of input letters read before halting or the probability that the first letter emitted is $a$, but any such measurements are taken with respect to the distribution on the space of inputs.

There are several ways to formalize the notion of a reduction. One approach, following \cite{Peres92}, is to model a reduction as a map $f:\Ss\to\Gs$ that is monotone with respect to the prefix relation on strings; that is, if $x,y\in\Ss$ and $x$ is a prefix of $y$, then $f(x)$ is a prefix of $f(y)$. Monotonicity implies that $f$ can be extended uniquely by continuity to domain $\Ss\union\So$ and range $\Gs\union\Gamma^\omega$. The map $f$ would then constitute a reduction from the random process $X=X_0X_1X_2\ldots$ to $f(X_0X_1X_2\ldots)=Y_0Y_1Y_2\ldots$, where the random variable $X_i$ gives the $i$th letter of the input stream and $Y_i$ the $i$th letter of the output stream. To be a reduction from $\mu$ to $\nu$, it must hold that if the $X_i$ are independent and identically distributed as $\mu$, then the $Y_i$ are independent and identically distributed as $\nu$.

In this paper we propose an alternative state-based approach in which protocols are modeled as coalgebras $\dd:S\times\Sigma\to S\times\Gs$, where $S$ is a (possibly infinite) set of \emph{states}.\footnote{This is a coalgebra with respect to the endofunctor $(-\times\Gs)^\Sigma$ on $\Set$. Normally, as the structure map for such a coalgebra, $\delta$ would be typed as $\delta:S\to (S\times\Gs)^\Sigma$, but we have recurried it here to align more with the intuition of $\delta$ as the transition map of an automaton.} We can view a protocol as a deterministic stream automaton with output. In each step, depending on its current state, the protocol samples the input process, emits zero or more output letters, and changes state, as determined by its transition function $\delta$. The state-based approach has the advantage that it is familiar to computer scientists, is easily programmable, and supports common constructions such as composition.

\subsection{Protocols and Reductions}
\label{sec:redprot}

Let $\Sigma$, $\Gamma$ be finite alphabets. Let $\Ss$ denote the set of finite words and $\So$ the set of $\omega$-words (streams) over $\Sigma$. We use $x,y,\ldots$ for elements of $\Ss$ and $\alpha,\beta,\ldots$ for elements of $\So$. The symbols $\prefeq$ and $\pref$ denote the prefix and proper prefix relations, respectively.

If $\mu$ is a probability measure on $\Sigma$, we endow $\So$ with the product measure in which each symbol is distributed as $\mu$. The notation $\Pr(A)$ for the probability of an event $A$ refers to this measure. The measurable sets of $\So$ are the Borel sets of the Cantor space topology whose basic open sets are the \emph{intervals} $\set{\alpha\in\So}{x\pref\alpha}$ for $x\in\Ss$, and $\mu(\set{\alpha\in\So}{x\pref\alpha})=\mu(x)$, where $\mu(a_1a_2\cdots a_n)=\mu(a_1)\mu(a_2)\cdots\mu(a_n)$; see \cite{Halmos50}.

A \emph{protocol} is a coalgebra $(S,\dd)$ where $\dd:S\times\Sigma\to S\times\Gs$. Intuitively, $\delta(s,a) = (t,x)$ means that in state $s$, it consumes the letter $a$ from its input source, emits a finite, possibly empty string $x$, and transitions to state $t$.

We can immediately extend $\dd$ to domain $S\times\Ss$ by coinduction:
\begin{align*}
\dd(s,\eps) &= (s,\eps)\\
\dd(s,ax) &= \letin{(t,y)}{\dd(s,a)}{\letin{(u,z)}{\dd(t,x)}{(u,yz)}}.
\end{align*}
Since the two functions agree on $S\times\Sigma$, we use the same name. It follows that
\begin{align*}
\dd(s,xy) &= \letin{(t,z)}{\dd(s,x)}{\letin{(u,w)}{\dd(t,y)}{(u,zw)}}.
\end{align*}
By a slight abuse, we define the \emph{length} of the output as the length of its second component as a string in $\Gs$ and write $\len{\dd(s,x)}$ for $\len{z}$, where $\dd(s,x)=(t,z)$.

A protocol $\dd$ also induces a partial map $\ds:S\times\So\pfun\Go$ by coinduction:\footnote{The definition is coinductive in the sense that it involves the greatest fixpoint of a monotone map. We must take the greatest fixpoint to get the infinite behaviors as well as the finite behaviors.}
\begin{align*}
\ds(s,a\alpha) &= \letin{(t,z)}{\dd(s,a)}{z\cdot\ds(t,\alpha)}.
\end{align*}
It follows that
\begin{align*}
\ds(s,x\alpha) &= \letin{(t,z)}{\dd(s,x)}{z\cdot\ds(t,\alpha)}.
\end{align*}
Given $\alpha\in\So$, this defines a unique infinite string in $\ds(s,\alpha)\in\Gamma^\omega$ except in the degenerate case in which only finitely many output letters are ever produced.

A protocol is said to be \emph{productive} (with respect to a given probability measure on input streams) if, starting in any state, an output symbol is produced within finite expected time. It follows from this assumption that infinitely many output letters are produced with probability 1. The supremum over all states $s$ of the expected time before an output symbol is produced starting from $s$ is called the \emph{latency} of the protocol. We will restrict attention to protocols with finite latency.

Now let $\nu$ be a probability measure on $\Gamma$. Endow $\Gamma^\omega$ with the product measure in which each symbol is distributed as $\nu$, and define
\begin{align*}
\nu(a_1a_2\cdots a_n)=\nu(a_1)\nu(a_2)\cdots\nu(a_n),\ \ a_i\in\Gamma.
\end{align*}
We say that a protocol $(S,\dd,s)$ with start state $s\in S$ is a \emph{reduction from $\mu$ to $\nu$} if for all $y\in\Gs$,
\begin{align}
\Pr(y \prefeq \ds(s,\alpha)) &= \nu(y),\label{def:red}
\end{align}
where the probability $\Pr$ is taken with respect to the product measure $\mu$ on $\So$.
This implies that the symbols of $\ds(s,\alpha)$ are independent and identically distributed as $\nu$.

\subsection{Restart Protocols}
\label{sec:restart}

A \emph{prefix code} is a subset $A\subs\Ss$ such that every element of $\So$ has at most one prefix in $A$. Thus the elements of a prefix code are pairwise $\prefeq$-incomparable. A prefix code is \emph{exhaustive} (with respect to a given probability measure on input streams) if $\Pr(\text{$\alpha\in\So$ has a prefix in $A$})=1$. By K\"onig's lemma, if every $\alpha\in\So$ has a prefix in $A$, then $A$ is finite and exhaustive, but exhaustive codes need not be finite; for example, under the uniform measure on binary streams, the prefix code $\set{0^n1}{n\ge 0}$ is infinite and exhaustive. 

We often think of prefix codes as representing their infinite extensions. By a slight abuse of notation, if $\mu$ is a probability measure on $\So$ and $A\subs\Ss$ is a prefix code, we define
\begin{align}
\mu(A) &= \mu(\set{\alpha\in\So}{\exists x\in A\ x\prec\alpha}).\label{eq:prefixcode}
\end{align}

A \emph{restart protocol} is a protocol $(S,\dd,s)$ of a special form determined by a function $f:A\to\Gs$, where $A$ is an exhaustive prefix code, $A\ne\{\eps\}$, and $s$ is a designated start state. Intuitively, starting in $s$, we read symbols of $\Sigma$ from the input stream until encountering a string $x\in A$, output $f(x)$, then return to $s$ and repeat. Note that we are not assuming $A$ to be finite.

Formally, we can take the state space to be
\begin{align*}
S &= \set{u\in\Ss}{\text{$x\not\prefeq u$ for any $x\in A$}}
\end{align*}
and define $\dd:S\times\Sigma\to S\times\Gs$ by
\begin{align*}
\dd(u,a) &= \begin{cases}
(ua,\eps), & ua\not\in A,\\
(\eps,z), & ua\in A \text{ and } f(ua)=z
\end{cases}
\end{align*}
with start state $\eps$. Then for all $x\in A$, $\dd(\eps,x) = (\eps,f(x))$.

As with the more general protocols, we can extend to a partial function on streams, but here the definition takes a simpler form:
\begin{align*}
\ds(\eps,x\alpha) &= f(x)\cdot\ds(\eps,\alpha),\quad x\in A,\ \alpha\in\So.
\end{align*}

A restart protocol is \emph{positive recurrent} (with respect to a given probability measure on input streams) if, starting in the start state $s$, the expected time before the next visit to $s$ is finite. All finite-state restart protocols are positive recurrent, but infinite-state ones need not be.

If a restart protocol is positive recurrent, then the probability of eventually restarting is $1$, but the converse does not always hold. For example, consider a restart protocol that reads a sequence of coin flips until seeing the first heads. If the number of flips it read up to that point is $n$, let it read $2^n$ more flips and output the sequence of all flips it read, then restart. The probability of restarting is 1, but the expected time before restarting is infinite.

\subsection{Convergence}
\label{sec:convergence}

We will have the occasion to discuss the convergence of random variables. There are several notions of convergence in the literature, but for our purposes the most useful is \emph{convergence in probability}. Let $X$ and $X_n$, $n\ge 0$ be bounded nonnegative random variables. We say that the sequence $X_n$ \emph{converges to $X$ in probability} and write $X_n\cpr X$ if for all fixed $\delta>0$,
\begin{align*}
\Pr(\len{X_n-X} > \delta) = o(1).
\end{align*}

Let $\Exp X$ denote the expected value of $X$ and $\Var X$ its variance.
\begin{lemma}\ 
\label{lem:prfacts}
\begin{enumerate}[{\upshape(i)}]
\item\label{lem:prfactsi}
If $X_n\cpr X$ and $X_n\cpr Y$, then $X=Y$ with probability 1.
\item\label{lem:prfactsii}
If $X_n\cpr X$ and $Y_n\cpr Y$, then $X_n+Y_n\cpr X+Y$ and $X_nY_n\cpr XY$.
\item\label{lem:prfactsiii}
If $X_n\cpr X$ and $X$ is bounded away from 0, then $1/X_n\cpr 1/X$.
\item\label{lem:prfactsiv}
If $\Var{X_n}=o(1)$ and $\Exp{X_n}=e$ for all $n$, then $X_n\cpr e$.
\end{enumerate}
\end{lemma}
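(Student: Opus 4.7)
The overall plan is to treat each clause as a direct application of the triangle inequality together with the boundedness hypotheses and a union bound. Parts (i) and (iv) are immediate: for (i), fix $\delta>0$; from the inclusion $\{\len{X-Y}>2\delta\}\subseteq\{\len{X_n-X}>\delta\}\cup\{\len{X_n-Y}>\delta\}$ I get $\Pr(\len{X-Y}>2\delta)=o(1)$, and since the left-hand side does not depend on $n$ it must be $0$; letting $\delta$ range over a positive sequence tending to $0$ then gives $X=Y$ almost surely. For (iv), Chebyshev's inequality yields $\Pr(\len{X_n-e}>\delta)\le\Var{X_n}/\delta^2=o(1)$ directly.

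For (ii), the sum case follows from $\len{(X_n+Y_n)-(X+Y)}\le\len{X_n-X}+\len{Y_n-Y}$ and a union bound applied to the event that one of the summands exceeds $\delta/2$. For the product I would write
\[
\len{X_nY_n-XY}\le\len{X_n}\cdot\len{Y_n-Y}+\len{Y}\cdot\len{X_n-X},
\]
and then exploit a uniform bound $M$ valid on all the random variables involved (supplied by the boundedness hypothesis) to reduce the event $\{\len{X_nY_n-XY}>\delta\}$ to $\{\len{Y_n-Y}>\delta/(2M)\}\cup\{\len{X_n-X}>\delta/(2M)\}$, each of probability $o(1)$ by hypothesis.

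For (iii), let $c>0$ satisfy $X\ge c$ almost surely. On the event $\{\len{X_n-X}<c/2\}$ we have $X_n\ge c/2$, so $X_nX\ge c^2/2$ and hence $\len{1/X_n-1/X}=\len{X_n-X}/(X_nX)\le(2/c^2)\len{X_n-X}$. Consequently $\Pr(\len{1/X_n-1/X}>\delta)\le\Pr(\len{X_n-X}\ge\min(c/2,\delta c^2/2))=o(1)$. I expect the only mild subtlety to lie here: one must simultaneously keep the denominator bounded away from zero and the numerator small, but intersecting with the high-probability event $\{\len{X_n-X}<c/2\}$ accomplishes both at once, and the remaining estimates are routine.
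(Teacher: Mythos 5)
Your argument is correct in all four parts. The paper itself only proves \eqref{lem:prfactsiv}, using exactly the Chebyshev bound you invoke, and leaves \eqref{lem:prfactsi}--\eqref{lem:prfactsiii} as standard facts; your triangle-inequality-plus-union-bound arguments for those (and your careful use of the boundedness hypothesis for the product, and of the event $\{\len{X_n-X}<c/2\}$ to control the denominator in \eqref{lem:prfactsiii}) are the usual proofs and are sound.
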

\begin{proof}
For \eqref{lem:prfactsiv}, by the Chebyshev bound $\Pr(\len{X-\Exp X} > k\sqrt{\Var X}) < 1/k^2$, for all fixed $\delta>0$,
\begin{align*}
\Pr(\len{X_n - e} > \delta) < \delta^{-2}\Var{X_n},
\end{align*}
and the \rhs\ is $o(1)$ by assumption.
\end{proof}

See \cite{Chung74,Feller71v1,Feller71v2,Doob53,Kolmogorov50,Kolmogorov56,durrett2010probability} for a more thorough introduction.

\subsection{Efficiency}
\label{sec:efficiency}

The \emph{efficiency} of a protocol is the long-term ratio of entropy production to entropy consumption. Formally, for a fixed protocol $\dd:S\times\Sigma\to S\times\Gs$, $s\in S$, and $\alpha\in\So$, define the random variables
\begin{align}
E_n(\alpha) = \frac{\len{\dd(s,\alpha_n)}}{n} \cdot \frac{H(\nu)}{H(\mu)},\label{eq:En}
\end{align}
where $H$ is the Shannon entropy
\begin{align*}
H(\seq p1n) &= -\sum_{i=1}^n p_i\log p_i
\end{align*}
(logarithms are base $2$ if not otherwise annotated), $\mu$ and $\nu$ are the input and output distributions, respectively, and $\alpha_n$ is the prefix of $\alpha$ of length $n$.
Intuitively, the Shannon entropy of a distribution measures the amount of randomness in it, where the basic unit of measurement is one fair coin flip. For example, as noted in the introduction, one roll of a fair six-sided die is worth about 2.585 coin flips. The random variable $E_n$ measures the ratio of entropy production to consumption after $n$ steps of $\dd$ starting in state $s$. Here $\len{\dd(s,\alpha_n)}\cdot H(\nu)$ (respectively, $n\cdot H(\mu)$) is the contribution along $\alpha$ to the production (respectively, consumption) of entropy in the first $n$ steps. We write $E_n^{\dd,s}$ when we need to distinguish the $E_n$ associated with different protocols and start states.

In most cases of interest, $E_n$ converges in probability to a unique constant value independent of start state and history. When this occurs, we call this constant value the \emph{efficiency} of the protocol $\dd$ and denote it by $\Eff\dd$. Notationally,
\begin{align*}
E_n \cpr \Eff\dd.
\end{align*}
One must be careful when analyzing infinite-state protocols: The efficiency is well-defined for finite-state protocols, but may not exist in general. For positive recurrent restart protocols, it is enough to measure the ratio for one iteration of the protocol.

In \S\ref{sec:serial} we will give sufficient conditions for the existence of $\Eff\dd$ that are satisfied by all protocols considered in \S\ref{sec:reduction}.

\subsection{Capacity}
\label{sec:capacity}

After reading some fixed number $n$ of random input symbols, the automaton implementing the protocol $\delta$ will have emitted a string of outputs $y_n$ and will also be in some random state $s_n$, where $(s_n,y_n)=\delta(s,a_1a_2\cdots a_n)$. The state $s_n$ will be distributed according to some distribution $\sigma_n$, which is induced by the distribution $\mu$ on inputs, therefore contains information $H(\sigma_n)$. We regard this quantity as information that is stored in the current state, later to be emitted as output or discarded. Any subsequent output entropy produced by the protocol is bounded by the sum of this stored entropy and additional entropy from further input.

Restart protocols operate by gradually consuming entropy from the input and storing it in the state, then emitting some fraction of the stored entropy as output all at once and returning to the start state. The stored entropy drops to $0$ at restart, reflecting the fact that no information is retained; any entropy that was not emitted as output is lost.

For finite-state protocols, the stored entropy is bounded by the base-2 logarithm of the size of the state space, the entropy of the uniform distribution. We call this quantity the \emph{capacity} of the protocol:
\begin{align}
\Cp\delta &= \log_2\len S.\label{def:capacity}
\end{align}
The capacity is a natural measure of the complexity of $\delta$, and we will take it as our complexity measure for finite-state protocols. In \S\ref{sec:reduction}, we will construct families of protocols for various reductions indexed by a tunable parameter $k$ proportional to the capacity. The efficiency of the protocols is expressed as a function of $k$; by choosing larger $k$, greater efficiency can be achieved at the cost of a larger state space. The results of \S\ref{sec:reduction} quantify this tradeoff.

\subsection{Entropy and Conditional Entropy}
\label{sec:entropy}

In this subsection we review a few elementary facts about entropy and conditional entropy that we will need. These are well known; the reader is referred to \cite{CoverThomas91,Adamek91} for a more thorough treatment.

Let $p = (p_n : n\in N)$ be any discrete finite or countably infinite subprobability distribution (that is, all $p_n\ge 0$ and $\sum_{n\in N}p_n\le 1$) with finite entropy
\begin{align*}
H(p) = H(p_n : n\in N) &= -\sum_{n\in N} p_n\log p_n < \infty.
\end{align*}
For $E\subs N$, define $p_E = \sum_{n\in E} p_n$. The \emph{conditional entropy} with respect to the event $E$ is defined as
\begin{align}
H(p \mid E) &= H(\frac{p_n}{p_E} : n\in E)
= -\sum_{n\in E} \frac{p_n}{p_E}\log \frac{p_n}{p_E}.\label{eq:condentdef}
\end{align}
It follows that
\begin{align}
H(p_n : n\in E) &= p_E H(p \mid E) - p_E\log p_E.\label{eq:ent1}
\end{align}
A \emph{partition} of $N$ is any finite or countable collection of nonempty pairwise disjoint subsets of $N$ whose union is $N$.
\begin{lemma}[Conditional entropy rule; see {\cite[\S2.2]{CoverThomas91}}]
\label{lem:condent}
Let $p = (p_n : n\in N)$ be a discrete subprobability distribution with finite entropy, and let $\EE$ be any partition of $N$. Then
\begin{align*}
H(p) = H(p_A : A\in\EE) + \sum_{A\in\EE} p_A H(p \mid A).
\end{align*}
\end{lemma}
\begin{proof}
From \eqref{eq:ent1},
\begin{align*}
\sum_{A\in\EE} p_A H(p \mid A)
&= \sum_{A\in\EE} H(p_n : n\in A) + \sum_{A\in\EE} p_A \log p_A\\
&= H(p) - H(p_A : A\in\EE).
\qedhere
\end{align*}
\end{proof}

It is well known that the probability distribution on $d$ letters that maximizes entropy is the uniform distribution with entropy $\log d$ (see \cite{CoverThomas91}). A version of this is also true for subprobability distributions:
\begin{lemma}
\label{lem:ent3}
The uniform subprobability distribution $(s/d,\ldots,s/d)$ on $d$ letters with total mass $s$ and entropy $s \log(d/s)$ maximizes entropy among all subprobability distributions on $d$ letters with total mass $s$.
\end{lemma}
\begin{proof}
For any subprobability distribution $(\seq p1d)$ with $s=\sum_{i=1}^d p_i$, it follows from the definitions that
\begin{align*}
H(\seq p1d)
&= sH(\frac{p_1}s,\ldots,\frac{p_d}s) - s\log s\\
&\le sH(\frac 1d,\ldots,\frac 1d) - s\log s
= H(\frac sd,\ldots,\frac sd)
= s \log \frac ds.
\qedhere
\end{align*}
\end{proof}


\section{Basic Results}
\label{sec:basic}

Let $\dd:S\times\Sigma\to S\times\Gs$ be a protocol reducing $\mu$ to $\nu$.
We can associate with each $y\in\Gs$ and state $s\in S$ a prefix code in $\Ss$, namely
\begin{align}
\pcd y &= \{\text{$\prec$-minimal strings $x\in\Ss$ such that $y\prefeq\dd(s,x)$}\}.\label{eq:pc}
\end{align}
The string $y$ is generated as a prefix of the output if and only if exactly one $x\in\pcd y$ is consumed as a prefix of the input. These events must occur with the same probability, so
\begin{align}
\nu(y) = \Pr(y \pref \ds(s,\alpha)) = \mu(\pcd y),\label{eq:munu}
\end{align}
where $\mu(\pcd y)$ is defined in \eqref{eq:prefixcode}.
Note that $\pcd y$ need not be finite.

\begin{lemma}
If $A\subs\Gs$ is a prefix code, then so is $\bigcup_{y\in A}\pcd y\subs\Ss$, and
\begin{align*}
\nu(A) = \mu(\bigcup_{y\in A}\pcd y).
\end{align*}
If $A\subs\Gs$ is exhaustive, then so is $\bigcup_{y\in A}\pcd y\subs\Ss$.
\end{lemma}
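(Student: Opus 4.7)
The plan has three steps, one for each assertion; let $B = \bigcup_{y\in A}\pcd y$ for brevity.

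First I would show that $B$ is a prefix code. The key tool is monotonicity of $\dd$ in its second argument: the identity $\dd(s,xy) = \letin{(t,z)}{\dd(s,x)}{\letin{(u,w)}{\dd(t,y)}{(u,zw)}}$ shows that the output component of $\dd(s,x)$ is a prefix of that of $\dd(s,xy)$. Now suppose $x_1\prefeq x_2$ with $x_i\in\pcd{y_i}$ for some $y_i\in A$. By monotonicity both $y_1$ and $y_2$ are prefixes of the output component of $\dd(s,x_2)$, so they are comparable; since $A$ is a prefix code, $y_1=y_2$. Thus $x_1$ and $x_2$ both lie in $\pcd{y_1}$, and the minimality clause in the definition of $\pcd{y_1}$ forbids a proper prefix of $x_2$ from witnessing $y_1\prefeq\dd(s,\cdot)$. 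Since $x_1$ does witness it, $x_1=x_2$.

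Second, I would derive the measure equation. The first step implies that the $\pcd y$ for distinct $y\in A$ are pairwise disjoint, so by \eqref{eq:munu} and countable additivity,
\[
\mu(B) \;=\; \sum_{y\in A}\mu(\pcd y) \;=\; \sum_{y\in A}\nu(y) \;=\; \nu(A),
\]
where the final equality uses that $A$ is itself a prefix code in $\Gs$, so its cylinders in $\Go$ are disjoint. For the exhaustiveness claim, if $A$ is exhaustive then $\sum_{y\in A}\nu(y)=1$, so the displayed chain gives $\mu(B)=1$, which is precisely exhaustiveness of $B$.

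I do not anticipate a serious obstacle. The substantive ingredient is monotonicity of $\dd$ in its word argument, which is immediate from the coinductive definition, combined with the prefix-code property of $A$ used to collapse the witnesses $y_1,y_2$. The only point that requires attention is the bookkeeping between the two distinct prefix structures in play — on $\Ss$ (for each $\pcd y$ and for $B$) and on $\Gs$ (for $A$) — and the fact that the argument works uniformly without any finiteness assumption on $A$ or the $\pcd y$.
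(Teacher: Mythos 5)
Your proof is correct and takes essentially the same route as the paper: monotonicity of $\dd$ to relate prefix-comparability in $\Sigma^*$ and $\Gamma^*$, equation \eqref{eq:munu} with countable additivity for the measure identity, and deduction of exhaustiveness from $\mu(B)=\nu(A)=1$. The only cosmetic difference is that you argue directly that comparable $x_1\prefeq x_2$ force $x_1=x_2$, whereas the paper states the contrapositive (incomparable $y$'s yield incomparable $x$'s) and appeals separately to each $\pcd y$ already being a prefix code.
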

\begin{proof}
We have observed that each $\pcd y$ is a prefix code. If $y_1$ and $y_2$ are $\prefeq$-incomparable, and if $y_1\prefeq\dd(s,x_1)$
and $y_2\prefeq\dd(s,x_2)$, then $x_1$ and $x_2$ are $\prefeq$-incomparable, thus $\bigcup_{y\in A}\pcd y$ is a prefix code. By \eqref{eq:munu}, we have
\begin{align*}
\nu(A) = \sum_{y\in A}\nu(y) = \sum_{y\in A}\mu(\pcd y) = \mu(\bigcup_{y\in A}\pcd y).
\end{align*}
If $A\subs\Gs$ is exhaustive, then so is $\bigcup_{y\in A}\pcd y$, since the events both occur with probability 1 in their respective spaces.
\end{proof}

\begin{lemma}\ 
\label{lem:delta}
\begin{enumerate}[{\upshape(i)}]
\item\label{lem:deltai}
The partial function $\ds(s,-):\So\pfun\Go$ is continuous, thus Borel measurable.
\item\label{lem:deltaii}
$\ds(s,\alpha)$ is almost surely infinite; that is, $\mu(\dom\ds(s,-)) = 1$.
\item\label{lem:deltaiii}
The measure $\nu$ on $\Go$ is the push-forward measure $\nu = \mu\circ\ds(s,-)^{-1}$.
\end{enumerate}
\end{lemma}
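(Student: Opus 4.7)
My plan is to establish (i)--(iii) in order, each reducing to a straightforward application of the preceding lemma (on preimages of exhaustive prefix codes) plus a bit of standard measure theory.

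For (i), I would compute the preimage of a basic open $\set{\beta\in\Go}{y\pref\beta}$, $y\in\Gs$. From the definition of $\pcd y$, we have $y\prefeq\ds(s,\alpha)$ iff $\alpha$ has some prefix in $\pcd y$, so the preimage equals
\begin{align*}
\bigcup_{x\in\pcd y}\set{\alpha\in\So}{x\pref\alpha},
\end{align*}
a union of basic opens of $\So$ and hence open (in the subspace topology on $\dom\ds(s,-)$ as well). Continuity and Borel measurability of $\ds(s,-)$ follow.

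For (ii), the key observation is that $\Gamma^n\subs\Gs$ is an exhaustive prefix code in $\Go$ for each $n\ge 0$, since every $\beta\in\Go$ has a unique length-$n$ prefix. The preceding lemma then gives that $B_n := \bigcup_{y\in\Gamma^n}\pcd y$ is an exhaustive prefix code in $\So$. An input $\alpha$ lies in $\dom\ds(s,-)$ iff at least $n$ output symbols are eventually produced for every $n$, which is equivalent to $\alpha$ having a prefix in $B_n$ for every $n$. Each of these events has $\mu$-measure $1$, so does their countable intersection.

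For (iii), parts (i) and (ii) together make $\nu' := \mu\circ\ds(s,-)^{-1}$ a well-defined Borel probability measure on $\Go$. On a basic open, the reduction property \eqref{def:red} gives
\begin{align*}
\nu'(\set{\beta\in\Go}{y\pref\beta}) = \mu(\set{\alpha\in\So}{y\prefeq\ds(s,\alpha)}) = \nu(y),
\end{align*}
which is also $\nu$'s measure of that same set. Since the basic opens form a $\pi$-system generating the Borel $\sigma$-algebra on $\Go$, Dynkin's $\pi$-$\lambda$ theorem yields $\nu' = \nu$ on all Borel sets.

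The argument is mostly routine measure-theoretic bookkeeping; the one point requiring care is that the partiality of $\ds(s,-)$ must be confined to a measure-zero set for the push-forward in (iii) to make sense, which is exactly what (ii) secures.
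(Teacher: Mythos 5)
Your proof is correct, and parts (i) and (iii) are essentially the paper's argument: compute the preimage of a basic open as $\bigcup_{x\in\pcd y}\set{\alpha}{x\pref\alpha}$, then observe that the push-forward and $\nu$ agree on basic opens and hence everywhere. (The paper does not name the $\pi$-$\lambda$ theorem, but that is the mechanism it implicitly relies on.)

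Your treatment of (ii) is genuinely different from the paper's and, I would argue, preferable. The paper dispenses with (ii) by invoking the productivity (finite latency) assumption: ``starting from any state, the expected time before the next output symbol is generated is finite,'' hence infinitely many output symbols appear almost surely. That hypothesis is stated earlier in prose but does not appear in the statement of Lemma~\ref{lem:delta}. Your argument instead derives almost-sure infinite output purely from the reduction property: each $\Gamma^n$ is an exhaustive prefix code in $\Go$, so by the immediately preceding lemma each $B_n=\bigcup_{y\in\Gamma^n}\pcd y$ is an exhaustive prefix code in $\So$, whence $\mu(\{\alpha:\text{at least }n\text{ symbols are output}\})=1$ for every $n$, and a countable intersection finishes it. This makes Lemma~\ref{lem:delta} self-contained as a consequence of Definition~\eqref{def:red} alone, with no separate productivity hypothesis needed. (Note it gives almost-sure infinite output but not finite latency, so it is not a replacement for the productivity assumption elsewhere in the paper --- only here.) The one point worth flagging is that the preceding lemma's exhaustiveness claim rests on~\eqref{eq:munu}, which itself uses the reduction property; you are entitled to use it since Lemma~\ref{lem:delta} is implicitly in the reduction setting, but it is worth saying so explicitly to make clear the argument is not circular.
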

\begin{proof}
\eqref{lem:deltai}
Let $y\in\Gs$.
The preimage of $\set{\beta\in\Go}{y\pref\beta}$, a basic open set of $\Go$, is open in $\So$:
\begin{align*}
\ds(s,-)^{-1}(\set{\beta}{y\pref\beta}) &= \set{\alpha}{y\pref\ds(s,\alpha)}
= \bigcup_{x\in\pcd y}\,\set{\alpha}{x\pref\alpha}.
\end{align*}

\eqref{lem:deltaii}
We have assumed finite latency; that is, starting from any state, the expected time before the next output symbol is generated is finite. Thus the probability that infinitely many symbols are generated is 1.

\eqref{lem:deltaiii}
From \eqref{lem:deltai} and \eqref{eq:munu} we have
\begin{align*}
(\mu\circ\ds(s,-)^{-1})(\set{\beta}{y\pref\beta})
&= \mu(\bigcup_{x\in \pcd y} \set\alpha{x\pref\alpha})\\
&= \mu(\pcd y)
= \nu(y)
= \nu(\set{\beta}{y\pref\beta}).
\end{align*}
Since $\mu\circ\ds(s,-)^{-1}$ and $\nu$ agree on the basic open sets $\set{\beta}{y\pref\beta}$, they are equal.
\end{proof}

\begin{lemma}\ 
\label{lem:absolute}
If $\dd$ is a reduction from $\mu$ to $\nu$, then the random variables $E_n$ defined in \eqref{eq:En} are continuous and uniformly bounded by an absolute constant $R>0$ depending only on $\mu$ and $\nu$.
\end{lemma}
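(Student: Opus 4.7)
My plan is to handle continuity and uniform boundedness separately. Continuity of $E_n$ is immediate: since $E_n(\alpha)$ depends only on the prefix $\alpha_n$, the function $E_n$ is constant on each cylinder $\set\alpha{x\prefeq\alpha}$ with $x \in \Sigma^n$, hence locally constant and continuous (and in fact takes only finitely many values).

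For the uniform bound, the key step is the inequality $\mu(x) \leq \nu(\dd(s,x))$ for every $x \in \Ss$. This follows directly from the reduction property \eqref{def:red}: writing $y = \dd(s,x)$, whenever $x\prefeq\alpha$ we have $y\prefeq\ds(s,\alpha)$ by the definition of $\ds$, so
\begin{align*}
\mu(x) = \Pr(x\prefeq\alpha) \leq \Pr(y\prefeq\ds(s,\alpha)) = \nu(y).
\end{align*}
Let $\mu_{\min} = \min\{\mu(a) : a\in\Sigma,\ \mu(a)>0\}$ and $\nu_{\max} = \max_{a\in\Gamma}\nu(a)$, and suppose temporarily that $\nu_{\max}<1$. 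For any $x$ whose letters all lie in the support of $\mu$,
\begin{align*}
\mu_{\min}^{\len x} \leq \mu(x) \leq \nu(\dd(s,x)) \leq \nu_{\max}^{\len{\dd(s,x)}},
\end{align*}
and taking logs gives $\len{\dd(s,x)}/\len x \leq \log(1/\mu_{\min})/\log(1/\nu_{\max})$. Setting
\begin{align*}
R = \frac{\log(1/\mu_{\min})}{\log(1/\nu_{\max})}\cdot\frac{H(\nu)}{H(\mu)}
\end{align*}
bounds $E_n(\alpha)$ by $R$ on the set of full $\mu$-measure consisting of $\alpha$ whose letters all lie in the support of $\mu$. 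The constant $R$ depends only on $\mu$ and $\nu$, not on $n$, $s$, or the protocol beyond the reduction property.

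The main obstacle is really just edge-case bookkeeping. The boundary case $\nu_{\max} = 1$ makes $\nu$ a point mass, forces $H(\nu) = 0$, and reduces the statement to $E_n \equiv 0$; the case $H(\mu) = 0$ is implicitly excluded, since then the normalization in \eqref{eq:En} is ill-defined. Points $\alpha \in \So$ containing letters outside the support of $\mu$ form a null set on which the exponential bound $\mu_{\min}^{\len x}\leq\mu(x)$ fails trivially; so `uniform bound' is understood almost surely, which is all that is needed for the subsequent convergence-in-probability arguments.
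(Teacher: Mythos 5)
Your proof is correct and takes essentially the same route as the paper: the key inequality $\mu(x) \le \nu(y)$ (where $y$ is the output produced on consuming $x$), derived from the reduction property, is sandwiched between exponential bounds $(\min_a\mu(a))^{\len x} \le \mu(x)$ and $\nu(y) \le (\max_b\nu(b))^{\len y}$, and taking logs yields the same constant $R$; continuity is established in both cases by observing that $E_n$ depends only on the length-$n$ prefix, so preimages of open sets are open. Your extra attention to the degenerate cases ($\nu$ a point mass, $H(\mu)=0$, letters outside the support of $\mu$, and reading the bound as almost-sure) is a careful refinement the paper leaves implicit, but it is not a different argument.
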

\begin{proof}
For $x\in\Ss$, let $y$ be the string of output symbols produced after consuming $x$. The protocol cannot produce $y$ from $x$ with greater probability than allowed by $\nu$, thus
\begin{align*}
(\min_{a\in\Sigma} \mu(a))^{\len x} \le \mu(x) \le \nu(y) \le (\max_{b\in\Gamma} \nu(b))^{\len y}.
\end{align*}
Taking logs, $\len y \le \len x\log\min_{a\in\Sigma}\mu(a)/\log\max_{b\in\Gamma}\nu(b)$, thus we can choose
\begin{align*}
R=\frac{H(\nu)\log\min_{a\in\Sigma}\mu(a)}{H(\mu)\log\max_{b\in\Gamma}\nu(b)}.
\end{align*}

To show continuity, for $r\in\reals$,
\begin{align*}
E_n^{-1}(\set x{x < r})
&= \set{\alpha}{\len{\dd(s,\alpha_n)} < nrH(\mu)/H(\nu)}\\
&= \bigcup\,\set{\set{\alpha}{x\pref\alpha}}{\len x=n,\ \len{\dd(s,x)} < nrH(\mu)/H(\nu)},
\end{align*}
an open set.
\end{proof}

\subsection{Composition}
\label{sec:composition}

Protocols can be composed sequentially as follows. If
\begin{align*}
& \dd_1:S\times\Sigma\to S\times\Gs && \dd_2:T\times\Gamma\to T\times\Ds,
\end{align*}
then
\begin{align*}
& (\dd_1\cmp\dd_2):S\times T\times\Sigma\to S\times T\times\Ds\\
& (\dd_1\cmp\dd_2)((s,t),a) = \letin{(u,y)}{\dd_1(s,a)}{\letin{(v,z)}{\dd_2(t,y)}{((u,v),z)}}.
\end{align*}
Intuitively, we run $\dd_1$ for one step and then run $\dd_2$ on the output of $\dd_1$.
The following theorem shows that the partial map on infinite strings induced by the sequential composition of protocols agrees almost everywhere with the functional composition of the induced maps of the component protocols. 
\begin{theorem}
\label{thm:composition}
The partial map $\ds_2(t,\ds_1(s,-))$ of type $\Sigma^\omega\pfun\Delta^\omega$ is defined on all but a $\mu$-nullset and agrees with $(\dd_1\cmp\dd_2)^\omega((s,t),-)$ on its domain of definition.
\end{theorem}
\begin{proof}
We restrict inputs to the subset of $\So$ on which $\ds_1(s,-)$ is defined and produces a string in $\Go$ on which $\ds_2(t,-)$ is defined. This set is of measure 1: if $\ds_1$ reduces $\mu$ to $\nu$ and $\ds_2$ reduces $\nu$ to $\rho$, then by Lemma \ref{lem:delta}\eqref{lem:deltaiii}, 
\begin{align*}
\mu(\dom\ds_2(t,\ds_1(s,-))) &= \mu(\ds_1(s,-)^{-1}(\ds_2(t,-)^{-1}(\Delta^\omega)))\\
&= \nu(\ds_2(t,-)^{-1}(\Delta^\omega))
= \rho(\Delta^\omega)
= 1.
\end{align*}
Thus we only need to show that
\begin{align}
(\dd_1\cmp\dd_2)^\omega((s,t),\alpha) = \ds_2(t,\ds_1(s,\alpha))\label{eq:deltadelta}
\end{align}
for inputs $\alpha$ in this set.

We show \eqref{eq:deltadelta} by coinduction. A \emph{bisimulation} on infinite streams is a binary relation $R$ such that if $R(\beta,\gamma)$, then there exists a finite nonnull string $z$ such that
\begin{align}
\beta &= z\beta' & \gamma &= z\gamma' & R(\beta',\gamma').\label{eq:bisim}
\end{align}
That is, $\beta$ and $\gamma$ agree on a finite nonnull prefix $z$, and deleting $z$ from the front of $\beta$ and $\gamma$ preserves membership in the relation $R$. The coinduction principle on infinite streams says that if there exists a bisimulation $R$ such that $R(\beta,\gamma)$, then $\beta=\gamma$.

We will apply this principle with the binary relation
\begin{align*}
R(\beta,\gamma)\ &\Iff\ \exists \alpha\in\Sigma^\omega\ \exists s\in S\ \exists t\in T\ \ \beta=(\dd_1\cmp\dd_2)^\omega((s,t),\alpha) \wedge \gamma=\ds_2(t,\ds_1(s,\alpha)) 
\end{align*}
on $\Delta^\omega$. To show that this is a bisimulation, suppose $R(\beta,\gamma)$ with
\begin{align*}
\beta &= (\dd_1\cmp\dd_2)^\omega((s,t),a\alpha) & \gamma &= \ds_2(t,\ds_1(s,a\alpha)),
\end{align*}
where $a\in\Sigma$ and $\alpha\in\So$.
Unwinding the definitions,
\begin{align*}
\beta &= (\dd_1\cmp\dd_2)^\omega((s,t),a\alpha)\\
&= \letin{((u,v),z)}{(\dd_1\cmp\dd_2)((s,t),a)}{z\cdot(\dd_1\cmp\dd_2)^\omega((u,v),\alpha)}\\
&= \letin{(u,y)}{\dd_1(s,a)}{\letin{(v,z)}{\dd_2(t,y)}{z\cdot(\dd_1\cmp\dd_2)^\omega((u,v),\alpha)}}\\
\gamma &= \ds_2(t,\ds_1(s,a\alpha))\\
&= \letin{(u,y)}{\dd_1(s,a)}{\letin{\zeta}{\ds_1(u,\alpha)}{\ds_2(t,y\zeta)}}\\
&= \letin{(u,y)}{\dd_1(s,a)}{\letin{\zeta}{\ds_1(u,\alpha)}{\letin{(v,z)}{\dd_2(t,y)}{z\cdot\ds_2(v,\zeta)}}}\\
&= \letin{(u,y)}{\dd_1(s,a)}{\letin{(v,z)}{\dd_2(t,y)}{z\cdot\ds_2(v,\ds_1(u,\alpha))}},
\end{align*}
so if $(u,y)=\dd_1(s,a)$ and $(v,z)=\dd_2(t,y)$, then
\begin{align*}
\beta &= (\dd_1\cmp\dd_2)^\omega((s,t),a\alpha)
= z\cdot(\dd_1\cmp\dd_2)^\omega((u,v),\alpha) = z\beta'\\
\gamma &= \ds_2(t,\ds_1(s,a\alpha))
= z\cdot\ds_2(v,\ds_1(u,\alpha)) = z\gamma',
\end{align*}
where
\begin{align*}
\beta' &= (\dd_1\cmp\dd_2)^\omega((s,t),\alpha) & \gamma' &= \ds_2(t,\ds_1(s,\alpha)) & R(\beta',\gamma').
\end{align*}
We almost have \eqref{eq:bisim}, except that $z$ may be the null string, in which case $\beta=\beta'$ and $\gamma=\gamma'$, and we cannot conclude yet that $R$ is a bisimulation. But in this case we unwind again in the same way, and continue to unwind until we get a nonnull $z$, which must happen after finitely many steps by Lemma \ref{lem:delta}\eqref{lem:deltaii}. Thus $R$ is a bisimulation.

By the principle of coinduction, we can conclude \eqref{eq:deltadelta} for all $\alpha$ in the domain of definition of $\ds_2(t,\ds_1(s,-))$.
\end{proof}

\begin{corollary}
\label{cor:composition1}
If $\dd_1(s,-)$ is a reduction from $\mu$ to $\nu$ and $\dd_2(t,-)$ is a reduction from $\nu$ to $\rho$, then $(\dd_1\cmp\dd_2)((s,t),-)$ is a reduction from $\mu$ to $\rho$.
\end{corollary}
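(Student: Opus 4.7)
The plan is to translate the reduction property into the push-forward-measure characterization of Lemma~\ref{lem:delta}\eqref{lem:deltaiii}, and then apply Theorem~\ref{thm:composition} to factor the push-forward of the composite through the two component stream maps. In push-forward notation, the hypothesis on $\dd_1$ reads $\mu\circ\ds_1(s,-)^{-1}=\nu$, the hypothesis on $\dd_2$ reads $\nu\circ\ds_2(t,-)^{-1}=\omicron$, and the conclusion amounts to $\mu\circ(\dd_1\cmp\dd_2)^\omega((s,t),-)^{-1}=\omicron$.

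By Theorem~\ref{thm:composition}, the maps $(\dd_1\cmp\dd_2)^\omega((s,t),-)$ and $\ds_2(t,\ds_1(s,-))$ agree on $\So$ off a $\mu$-nullset, so they induce the same push-forward measure on $\Delta^\omega$. Since push-forward respects functional composition, that measure equals
\begin{align*}
\mu\circ(\ds_2(t,-)\circ\ds_1(s,-))^{-1} = (\mu\circ\ds_1(s,-)^{-1})\circ\ds_2(t,-)^{-1} = \nu\circ\ds_2(t,-)^{-1} = \omicron,
\end{align*}
where the last two equalities come from successively applying the two reduction hypotheses. Reading this back through Lemma~\ref{lem:delta}\eqref{lem:deltaiii}, or equivalently evaluating both sides on the basic open sets $\set{\gamma\in\Delta^\omega}{z\pref\gamma}$ for $z\in\Ds$, yields $\Pr(z\prefeq(\dd_1\cmp\dd_2)^\omega((s,t),\alpha))=\omicron(z)$, which is the defining property~\eqref{def:red} of a reduction from $\mu$ to $\omicron$.

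I do not anticipate a real obstacle. The coalgebraic substance has already been absorbed by Theorem~\ref{thm:composition}, and what remains is routine measure-theoretic manipulation. The one sanity check worth noting is that $\ds_2(t,\ds_1(s,-))$ really is defined on a set of full $\mu$-measure, which one obtains by applying Lemma~\ref{lem:delta}\eqref{lem:deltaii} first to $\dd_1$ (so $\ds_1(s,\alpha)$ is $\mu$-almost surely an element of $\Go$), and then to $\dd_2$ under the induced distribution $\nu$ on its input (so $\ds_2(t,-)$ is defined almost everywhere on its argument). With that in hand the chain of push-forward identities is justified.
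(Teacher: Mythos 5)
Your proposal is correct and follows essentially the same route as the paper's own proof: express the hypotheses and conclusion as push-forward identities via Lemma~\ref{lem:delta}\eqref{lem:deltaiii}, use that push-forward respects functional composition, and invoke Theorem~\ref{thm:composition} to identify $(\dd_1\cmp\dd_2)^\omega((s,t),-)$ with $\ds_2(t,\ds_1(s,-))$ up to a $\mu$-nullset. The explicit sanity check about almost-sure definedness is a nice touch but is already implicitly handled by Theorem~\ref{thm:composition}.
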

\begin{proof}
By the assumptions in the statement of the corollary,
$\nu = \mu\circ\ds_1(s,-)^{-1}$ and $\rho = \nu\circ\ds_2(t,-)^{-1}$.
By Theorem \ref{thm:composition},
\begin{align*}
\rho &= \mu\circ\ds_1(s,-)^{-1}\circ\ds_2(t,-)^{-1}
= \mu\circ(\ds_2(t,-)\circ\ds_1(s,-))^{-1}\\
&= \mu\circ(\ds_2(t,\ds_1(s,-)))^{-1}
= \mu\circ((\dd_1\cmp\dd_2)^\omega((s,t),-))^{-1}.
\qedhere
\end{align*}
\end{proof}

\begin{theorem}
\label{thm:composition2}
If $\dd_1(s,-)$ is a reduction from $\mu$ to $\nu$ and $\dd_2(t,-)$ is a reduction from $\nu$ to $\rho$, and if $\Eff{\dd_1}$ and $\Eff{\dd_2}$ exist, then $\Eff{\dd_1\cmp\dd_2}$ exists and
\begin{align*}
\Eff{\dd_1\cmp\dd_2} &= \Eff{\dd_1}\cdot\Eff{\dd_2}. 
\end{align*}
\end{theorem}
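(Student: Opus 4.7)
The plan is to decompose $E_n^{\dd_1\cmp\dd_2}(\alpha)$ as a product of two factors that converge in probability to $\Eff{\dd_1}$ and $\Eff{\dd_2}$, and then invoke Lemma~\ref{lem:prfacts}\eqref{lem:prfactsii}. Let $y_n$ denote the intermediate output string (the second component of $\dd_1(s,\alpha_n)$), let $m_n = \len{y_n}$, and let $\beta = \ds_1(s,\alpha)$. By Theorem~\ref{thm:composition} the total output of $\dd_1\cmp\dd_2$ on $\alpha_n$ is $\dd_2(t,y_n)$, and by Lemma~\ref{lem:delta}\eqref{lem:deltaiii} the intermediate stream $\beta$ is $\nu$-distributed with $y_n$ as its length-$m_n$ prefix. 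Hence
\begin{align*}
E_n^{\dd_1\cmp\dd_2}(\alpha)
= \frac{\len{\dd_2(t,y_n)}}{n}\cdot\frac{H(\omicron)}{H(\mu)}
= \left(\frac{\len{\dd_2(t,y_n)}}{m_n}\cdot\frac{H(\omicron)}{H(\nu)}\right)
\left(\frac{m_n}{n}\cdot\frac{H(\nu)}{H(\mu)}\right)
= E_{m_n}^{\dd_2}(\beta)\cdot E_n^{\dd_1}(\alpha).
\end{align*}
The second factor converges in probability to $\Eff{\dd_1}$ by hypothesis, so it remains to show $E_{m_n}^{\dd_2}(\beta)\cpr\Eff{\dd_2}$.

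This is the main obstacle: $m_n$ is a random index jointly distributed with $\beta$, and the hypothesis $E_k^{\dd_2}(\beta)\cpr\Eff{\dd_2}$ (valid for deterministic $k \to \infty$) does not automatically transfer to random subsequences. The crucial extra structure is monotonicity of $\len{\dd_2(t,-)}$ in the prefix order, which yields the pointwise sandwich
\begin{align*}
E_{k_1}^{\dd_2}(\beta)\cdot\frac{k_1}{k_2}
\;\le\; E_{m_n}^{\dd_2}(\beta) \;\le\;
E_{k_2}^{\dd_2}(\beta)\cdot\frac{k_2}{k_1}
\end{align*}
whenever $k_1 \le m_n \le k_2$. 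If $\Eff{\dd_1} = 0$, Lemma~\ref{lem:absolute} gives $E_{m_n}^{\dd_2}(\beta) \le R_2$ uniformly for some constant $R_2$, so the product is bounded by $R_2\cdot E_n^{\dd_1}(\alpha) \cpr 0$, matching $\Eff{\dd_1}\cdot\Eff{\dd_2} = 0$, and we are done.

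Otherwise let $c_1 = \Eff{\dd_1}\,H(\mu)/H(\nu) > 0$, so $m_n/n \cpr c_1$. Given $\delta,\epsilon > 0$, choose $\eta \in (0,c_1)$ small enough that both sandwich bounds $(\Eff{\dd_2}\pm\delta/4)(c_1\pm\eta)/(c_1\mp\eta)$ lie within $\delta$ of $\Eff{\dd_2}$, and set $k_1 = \lfloor(c_1-\eta)n\rfloor$ and $k_2 = \lceil(c_1+\eta)n\rceil$. For large $n$, each of $\{m_n \in [k_1,k_2]\}$, $\{\len{E_{k_1}^{\dd_2}(\beta) - \Eff{\dd_2}} < \delta/4\}$, and $\{\len{E_{k_2}^{\dd_2}(\beta) - \Eff{\dd_2}} < \delta/4\}$ has probability exceeding $1 - \epsilon/3$ (the last two by convergence along the deterministic sequences $k_i \to \infty$), and on their intersection $\len{E_{m_n}^{\dd_2}(\beta) - \Eff{\dd_2}} < \delta$. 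Thus $E_{m_n}^{\dd_2}(\beta) \cpr \Eff{\dd_2}$, and Lemma~\ref{lem:prfacts}\eqref{lem:prfactsii} finishes the proof.
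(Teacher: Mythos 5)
Your decomposition $E_n^{\dd_1\cmp\dd_2}(\alpha)=E_n^{\dd_1}(\alpha)\cdot E_{m_n}^{\dd_2}(\beta)$ is precisely the one the paper uses, and you invoke the same Lemma~\ref{lem:prfacts}\eqref{lem:prfactsii} and Lemma~\ref{lem:delta}\eqref{lem:deltaiii} to finish, so the overall route is the same. The difference is that the paper stops there: after exhibiting the factorization it simply cites Lemma~\ref{lem:prfacts}\eqref{lem:prfactsii} and declares convergence, leaving unaddressed exactly the point you flag --- that $m_n=\len{\dd_1(s,\alpha_n)}$ is a random index, so the hypothesis ``$E_k^{\dd_2,t}\cpr\Eff{\dd_2}$ along deterministic $k$'' does not by itself give $E_{m_n}^{\dd_2,t}(\beta)\cpr\Eff{\dd_2}$. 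Your sandwich argument, which exploits the prefix-monotonicity of $\len{\dd_2(t,-)}$ together with $m_n/n\cpr\Eff{\dd_1}\,H(\mu)/H(\nu)$ (and the uniform bound from Lemma~\ref{lem:absolute} in the degenerate case $\Eff{\dd_1}=0$), is a correct and necessary supplement: without some such structural input, convergence along a random subsequence can genuinely fail. In short, the proposal is a more careful version of the paper's proof and closes a real gap in it; the only minor point you might add explicitly is that on the measure-one event where $\dd_1$ is productive, $m_n\to\infty$ and $m_n\ge 1$ eventually, so the factor $E_{m_n}^{\dd_2}(\beta)$ is well-defined for large $n$.
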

\begin{proof}
Let $\alpha\in\dom(\dd_1\cmp\dd_2)^\omega((s,t),-)$, say $\ds_1(s,\alpha)=\beta\in\Go$ with $\beta\in\dom\ds_2(t,-)$. Let $n\in\naturals$. The second component of $\dd_1(s,\alpha_n)$ is $\beta_m$ for some $m$, and $\len{\beta_m}=m=\len{\dd_1(s,\alpha_n)}$. Then
\begin{align*}
\frac{\len{(\dd_1\cmp\dd_2)((s,t),\alpha_n)}}{n}\cdot\frac{H(\rho)}{H(\mu)}
&= \frac{\len{\dd_2(t,\beta_m)}}{n}\cdot\frac{H(\rho)}{H(\mu)}\\
&= \frac{\len{\dd_2(t,\beta_m)}}{\len{\beta_m}}\cdot\frac{\len{\beta_m}}{n}\cdot\frac{H(\rho)}{H(\nu)}\cdot\frac{H(\nu)}{H(\mu)}\\
&= (\frac{\len{\dd_1(s,\alpha_n)}}{n}\cdot\frac{H(\nu)}{H(\mu)})
(\frac{\len{\dd_2(t,\beta_m)}}{m}\cdot\frac{H(\rho)}{H(\nu)})\\
&= E_n^{\dd_1,s}(\alpha)\cdot E_m^{\dd_2,t}(\beta).
\end{align*}
By Lemma \ref{lem:prfacts}\eqref{lem:prfactsii}, this quantity converges in probability to $\Eff{\dd_1}\cdot\Eff{\dd_2}$, so this becomes $\Eff{\dd_1\cmp\dd_2}$. 
\end{proof}

The capacity of the composition is additive:

\begin{theorem}
\label{thm:compositioncap}
For finite-state protocols, $\Cp{\dd_1\cmp\dd_2} = \Cp{\dd_1} + \Cp{\dd_2}$.
\end{theorem}
\begin{proof}
Immediate from the definition.
\end{proof}

\subsection{Protocol Families}
\label{sec:families}

In \S\ref{sec:reduction}, we will present families of reductions between concrete pairs of distributions. The families are indexed by a parameter $k$, which controls the tradeoff between the capacity of the protocol, proportional to $k$,
and its efficiency, typically expressed in the form $1-\Theta(f(k))$. Higher efficiency comes at the cost of higher capacity. Asymptotically optimal reductions were known to exist for all finite distributions (\cite{vonNeumann51,Elias72,MosselPeresHillar05,Peres92,PaeLoui06,PaeLoui05,Pae05}, cf.~Theorem \ref{thm:opt}); however, by considering the rate of convergence as a function of $k$, we obtain a natural measure of quality for a family of protocols that allows a finer-grained comparison.

A key consequence of our composition theorems (Theorems \ref{thm:composition2} and \ref{thm:compositioncap}) is that this notion of quality is preserved under composition. To make this notion precise, we first formalize protocol families.
\begin{definition}
We say that a sequence $\mathcal{P}=(P_k : k\in\naturals)$ is a \emph{capacity-indexed family of reductions} (\emph{cfr}) from $\mu$ to $\nu$ if
\begin{itemize}
\item each $P_k$ is a reduction from $\mu$ to $\nu$;
\item each $P_k$ has capacity $\Theta(k)$, that is, there exist constants $c_1$, $c_2$ independent of $k$ such that $c_1k\leq \Cp{P_k} \leq c_2k$.
\end{itemize}
\end{definition}
The notion of efficiency of a single reduction naturally generalizes to capacity-indexed families, as we can take
the efficiency $\Eff\mathcal{P}$ of a family $\mathcal{P}$ to be the function from the index $k$ to the efficiency of the $k$th protocol.

\begin{theorem}\label{thm:familycomposition}
Suppose $\mathcal{P}=(P_k : k\in\naturals)$ is a cfr from $\mu$ to $\nu$ with
efficiency $1-f(k)$ and $\mathcal{Q}=(Q_k : k\in\naturals)$ is a cfr from $\nu$ to $\rho$ with efficiency $1-g(k)$,
where $f(k)$ and $g(k)$ are non-negative real-valued functions.
Then $\mathcal{P}\cmp\mathcal{Q}=(P_k\cmp Q_k : k\in\naturals)$ is a cfr, and its efficiency is $\Eff(\mathcal{P}\cmp\mathcal{Q})(k)=(1-f(k))(1-g(k))\ge 1-(f(k)+g(k))$.
\end{theorem}
\begin{proof}
By Corollary \ref{cor:composition1}, each component of $\mathcal{P}\cmp\mathcal{Q}$ is a reduction from $\mu$ to $\rho$, and by Theorem \ref{thm:compositioncap}, its capacity is again $\Theta(k)$. That the efficiency of the composition exists and satisfies the stated bounds follows immediately from Theorem \ref{thm:composition2}.
\end{proof}

In other words, protocol families can be composed, and the resulting protocol family is asymptotically no worse than the worst of the two input families.

\begin{example}
In Section \ref{sec:dtorat}, we construct a cfr from
$c$-uniform to arbitrary rational distributions with efficiency $1-\Theta(k^{-1})$,
and in Section \ref{sec:arb2unif}, we construct a cfr from
an arbitrary distributions to a $c$-uniform one with
efficiency $1-\Theta(\log k/k)$. These two families can be
composed to obtain a cfr from an arbitrary distribution
to an arbitrary rational distribution. By Theorem \ref{thm:familycomposition},
the resulting cfr has efficiency $1-\Theta(\log k/k)$.
\end{example}

\subsection{Serial Protocols}
\label{sec:serial}

Consider an infinite sequence $(S_0,\dd_0,s_0)$, $(S_1,\dd_1,s_1),\ldots$ of positive recurrent restart protocols defined in terms of maps $f_k:A_k\to\Gs$, where the $A_k$ are exhaustive prefix codes, as described in \S\ref{sec:restart}. These protocols can be combined into a single \emph{serial protocol} $\dd$. Intuitively, the serial protocol starts in $s_0$, makes $\delta_0$-steps in $S_0$ accumulating the consumed letters until the consumed string $x$ is in $A_0$, then produces $f_0(x)$ and transitions to $s_1$, where it then repeats these steps for protocol $(S_1,\delta_1,s_1)$, then for $(S_2,\delta_2,s_2)$, and so on. Formally, the states of $\dd$ are the disjoint union of the $S_k$, and $\dd$ is defined so that $\dd(s_k,x) = (s_{k+1},f_k(x))$ for $x\in A_k$, and within $S_k$ behaves like $\dd_k$.

Let $C_k$ be a random variable representing the entropy consumption of the component protocol $\dd_k$ starting from $s_k$ during the execution of the serial protocol; that is, $C_k$ is the number of input symbols consumed by $\dd_k$ scaled by $H(\mu)$. This is a random variable whose values depend on the input sequence $\alpha\in\So$. Note that $C_k$ may be partial, but is defined with probability one by the assumption of bounded latency. Similarly, let $P_k$ be the number of output symbols written during the execution of $\dd_k$ scaled by $H(\nu)$. Let $e(n)$ be the index of the component protocol $\dd_{e(n)}$ in which the $n$-th step of the combined protocol occurs. Like $C_k$, $P_k$ and $e(n)$ are random variables whose values depend on the input sequence $\alpha\in\So$. Let $c_k=\Exp{C_k}$ and $p_k=\Exp{P_k}$.

To derive the efficiency of serial protocols, we need a form of the law of large numbers (see \cite{Chung74,Feller71v1}). Unfortunately, the law of large numbers as usually formulated does not apply verbatim, as the random variables in question are bounded but not independent, or (under a different formulation) independent but not bounded. Our main result, Theorem \ref{thm:opt} below, can be regarded as a specialized version of this result adapted to our needs.

Our version requires that the variances of certain random variables vanish in the limit. We need to impose mild conditions \eqref{eq:cassm} on the growth rate of $m_n$, the maximum consumption in the $n$th component protocol, and the growth rate of production relative to consumption. These conditions hold for all serial protocols considered in this paper. The left-hand condition of \eqref{eq:cassm} is satisfied by all serial protocols in which either $m_n$ is bounded or $m_n=O(n)$ and $\liminf_n c_n=\infty$.

\begin{lemma}
\label{lem:variance}
Let $\Var X$ denote the variance of $X$.
Let $m_n = \max_{x\in A_n}\len x\cdot H(\mu)$ and suppose that $m_n$ is finite for all $n$. If
\begin{align}
& m_n = o(\sum_{i=0}^{n-1} c_i) && \sum_{i=0}^np_i=\Omega(\sum_{i=0}^nc_i),\label{eq:cassm}
\end{align}
then
\begin{align}
\Var{\frac{\sum_{i=0}^n C_i}{\sum_{i=0}^n c_i}} &= o(1) &
\Var{\frac{C_n}{\sum_{i=0}^{n-1} c_i}} &= o(1)\label{eq:variance}\\[1ex]
\Var{\frac{\sum_{i=0}^n P_i}{\sum_{i=0}^n p_i}} &= o(1) &
\Var{\frac{P_n}{\sum_{i=0}^{n-1} p_i}} &= o(1).\label{eq:variance1}
\end{align}
\end{lemma}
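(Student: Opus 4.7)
The plan is to exploit independence across iterations together with a per-iteration length bound, reducing each claim to a single summation-by-parts estimate. First observe that $C_0, C_1, \ldots$ (and similarly $P_0, P_1, \ldots$) are mutually independent: each $\dd_k$ is a restart protocol that begins afresh in $s_k$, so $C_k$ depends only on the chunk of input consumed during the $k$-th iteration, and these chunks are disjoint fresh segments of the i.i.d.\ input stream (a strong-Markov type fact for i.i.d.\ sequences). In particular $\Var{\sum_{i=0}^n C_i} = \sum_{i=0}^n \Var{C_i}$.

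The definition $m_n = H(\mu)\cdot\max_{x\in A_n}|x|$ yields the deterministic bound $0 \le C_n \le m_n/H(\mu)$, giving $\Var{C_n} \le (m_n/H(\mu))^2$ and hence the right half of \eqref{eq:variance} by division by $S_{n-1}^2$, where $S_n := \sum_{i=0}^n c_i$. For the left half I would use the sharper estimate $\Var{C_i} \le \Exp{C_i^2} \le (m_i/H(\mu))\,c_i$ (valid since $C_i \in [0, m_i/H(\mu)]$ has mean $c_i$), which by independence gives
\[
\Var{\tfrac{1}{S_n}\textstyle\sum_{i=0}^n C_i} \;\le\; \frac{1}{H(\mu)\,S_n^2}\sum_{i=0}^n c_i\, m_i.
\]

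The crux is therefore to show $\sum_{i=0}^n c_i m_i = o(S_n^2)$ under the hypothesis $m_n = o(S_{n-1})$. Given $\epsilon > 0$, choose $N$ so that $m_i \le \epsilon S_{i-1}$ for $i \ge N$; the head $\sum_{i<N} c_i m_i$ is a fixed constant. For the tail, the telescoping inequality
\[
c_i\,S_{i-1} \;=\; (S_i - S_{i-1})\,S_{i-1} \;\le\; \tfrac12\bigl(S_i^2 - S_{i-1}^2\bigr)
\]
yields $\sum_{i=N}^n c_i m_i \le \epsilon\sum_{i=N}^n c_i S_{i-1} \le \tfrac{\epsilon}{2}S_n^2$. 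Since $c_i \ge 1$ gives $S_n \to \infty$, the head contribution is absorbed and letting $\epsilon \to 0$ finishes \eqref{eq:variance}.

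For \eqref{eq:variance1} I would rerun the same argument with $P_i$ in place of $C_i$. The per-iteration production is bounded a.s.\ by $M_i := \max_{x\in A_i}|f_i(x)|$, and the counting bound from Lemma~\ref{lem:absolute} applied to each component gives $M_i = O(m_i)$; combined with $p_n = \Theta(c_n)$, which implies $\sum p_i = \Theta(S_n)$ and hence $m_n = o(\sum_{i<n} p_i)$, the same telescoping estimate applied to $\sum p_i m_i$ yields both halves of \eqref{eq:variance1}. The main obstacle is the summation step above, which converts a pointwise little-$o$ hypothesis into the summable bound we need; everything else is routine variance algebra.
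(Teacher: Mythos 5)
Your proof is correct and follows essentially the same route as the paper's: independence of the $C_i$, the estimate $\Var{C_i}\le\Exp{C_i^2}\le m_i c_i$, and a head/tail split driven by $m_i=o(\sum_{j<i}c_j)$. The only cosmetic difference is that you convert $\sum_{i\ge N}c_iS_{i-1}\le\frac12 S_n^2$ by telescoping, whereas the paper bounds each tail term directly as $m_ic_i/S_n^2\le(m_i/S_{i-1})(c_i/S_n)<\eps\,c_i/S_n$; both give the same $O(\eps)$ conclusion. Your handling of \eqref{eq:variance1} via $\Exp{P_i^2}\le M_ip_i$ with $M_i=O(m_i)$ is also fine, and is in fact a cleaner route than the paper's shorthand $\Var{P_i}\le\Var{RC_i}$ (pointwise domination doesn't literally give variance domination, though the intended bound $\Exp{P_i^2}\le R^2\Exp{C_i^2}$ does plug into the argument), so no complaints there.
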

\begin{proof}
The properties \eqref{eq:variance} require only the left-hand condition of \eqref{eq:cassm}. Let $\eps>0$ be arbitrarily small. Choose $m$ such that $m_i/\sum_{j<i} c_j < \eps$ for all $i\ge m$, then choose $n>m$ such that $m_i/\sum_{j=0}^n c_j < \eps$ for all $i<m$. As the $C_i$ are independent,
\begin{align*}
\Var{\frac{\sum_{i=0}^n C_i}{\sum_{i=0}^n c_i}}
&= \sum_{i=0}^n \frac{\Var{C_i}}{(\sum_{j=0}^n c_j)^2}
\le \sum_{i=0}^n \frac{\Exp{C_i^2}}{(\sum_{j=0}^n c_j)^2}\\
&= \sum_{i=0}^{m-1}\Exp{\frac{C_i}{\sum_{j=0}^n c_j}\cdot\frac{C_i}{\sum_{j=0}^n c_j}} + \sum_{i=m}^{n}\Exp{\frac{C_i}{\sum_{j=0}^n c_j}\cdot\frac{C_i}{\sum_{j=0}^n c_j}}\\
&\le \sum_{i=0}^{m-1}\Exp{\frac{m_i}{\sum_{j=0}^n c_j}\cdot\frac{C_i}{\sum_{j=0}^n c_j}} + \sum_{i=m}^{n}\Exp{\frac{m_i}{\sum_{j=0}^{i-1} c_j}\cdot\frac{C_i}{\sum_{j=0}^n c_j}}\\
&\le \sum_{i=0}^{m-1}\Exp{\frac{\eps C_i}{\sum_{j=0}^n c_j}} + \sum_{i=m}^{n}\Exp{\frac{\eps C_i}{\sum_{j=0}^n c_j}}\ =\ \sum_{i=0}^{n}\frac{\eps c_i}{\sum_{j=0}^n c_j}\ =\ \eps\\
\Var{\frac{C_n}{\sum_{i=0}^{n-1} c_i}}
&\le \frac{\Exp{C_n^2}}{(\sum_{j=0}^{n-1} c_j)^2}
\le \frac{m_n^2}{(\sum_{j=0}^{n-1} c_j)^2} \le \eps^2.
\end{align*}
As $\eps$ was arbitrarily small, \eqref{eq:variance} holds.

If in addition the right-hand condition of \eqref{eq:cassm} holds, then by Lemma \ref{lem:absolute}, $m_n = o(\sum_{i=0}^{n-1} p_i)$ for all $n$. Then \eqref{eq:variance1} follows by the same proof with $P_i$, $p_i$, and $Rm_i$ substituted for $C_i$, $c_i$, and $m_i$, respectively.
\end{proof}

The following theorem, in conjunction with the constructions of \S\ref{sec:reduction}, shows that optimal efficiency is achievable in the limit. The result is mainly of theoretical interest, since the protocols involve infinitely many states.

\begin{theorem}
\label{thm:opt}
Let $\dd$ be a serial protocol with finite-state components $\dd_0,\dd_1,\ldots$ satisfying \eqref{eq:cassm}. If the limit
\begin{align}
\ell &= \lim_n \frac{\sum_{i=0}^n p_i}{\sum_{i=0}^n c_i}\label{eq:opt2}
\end{align}
exists, then the efficiency of the serial protocol exists and is equal to $\ell$.
\end{theorem}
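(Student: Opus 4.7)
The plan is to show $E_n \cpr \ell$ by sandwiching the output-to-input length ratio $|\dd(s,\alpha_n)|/n$ between two bounds evaluated at the boundary of the most recently completed iteration of the serial protocol, each of which converges in probability to $\ell$ via Lemmas \ref{lem:variance} and \ref{lem:prfacts}.

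First I would establish the deterministic-index boundary limit
\[ \frac{\sum_{i\le k} P_i}{\sum_{i\le k} C_i} \;\cpr\; \ell \]
as $k \to \infty$. By Lemma \ref{lem:variance}, the ratios $\sum_{i\le k} C_i / \sum_{i\le k} c_i$ and $\sum_{i\le k} P_i / \sum_{i\le k} p_i$ each have variance $o(1)$ and mean $1$, so Lemma \ref{lem:prfacts}(iv) gives that both $\cpr 1$; Lemma \ref{lem:prfacts}(ii),(iii) then shows that their quotient $\cpr 1$, and multiplying by the deterministic limit $\sum p_i/\sum c_i \to \ell$ from (\ref{eq:opt2}) yields the claim. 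Additionally, the second halves of (\ref{eq:variance}) and (\ref{eq:variance1}), together with the deterministic upper bound $C_k \le m_k/H(\mu)$ and assumption (\ref{eq:cassm}), show that per-iteration contributions are asymptotically negligible relative to the cumulative totals: $C_{k+1}/\sum_{i\le k} C_i \cpr 0$ and $P_{k+1}/\sum_{i\le k} C_i \cpr 0$.

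Now let $N = N(n)$ be the number of iterations fully completed after the first $n$ input symbols have been consumed, so $\sum_{i\le N} C_i \le n < \sum_{i\le N+1} C_i$ and $\sum_{i\le N} P_i \le |\dd(s,\alpha_n)| \le \sum_{i\le N+1} P_i$. This gives the sandwich
\[ \frac{\sum_{i\le N} P_i}{\sum_{i\le N} C_i + C_{N+1}} \;\le\; \frac{|\dd(s,\alpha_n)|}{n} \;\le\; \frac{\sum_{i\le N} P_i}{\sum_{i\le N} C_i} + \frac{P_{N+1}}{\sum_{i\le N} C_i}. \]
Both sides are built from quantities whose deterministic-index limits are known from the previous step, so once those convergences are transferred to the random index $N$, both bounds $\cpr \ell$ by Lemma \ref{lem:prfacts}(ii),(iii) and the conclusion $E_n \cpr \ell$ follows.

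The main obstacle is precisely this random-index transfer, since $N(n)$ and the $C_i, P_i$ appearing in the ratios are correlated via the input stream $\alpha$. The resolution rests on three ingredients: (a) because each restart iteration begins fresh at its start state $s_i$ and operates on a disjoint segment of the i.i.d.\ input, the pairs $(C_i, P_i)$ are mutually independent, and $N(n)$ is a stopping time in their natural filtration; (b) assumption (\ref{eq:cassm}) together with $C_k\le m_k/H(\mu)$ actually gives the almost-sure deterministic bound $C_k/\sum_{i<k} c_i \to 0$, so $C_{N+1}/\sum_{i\le N} c_i$ is small on the event that $N$ is large; and (c) $N(n) \to \infty$ a.s., since each $C_i \ge 1$ and $\sum c_i \to \infty$ by (\ref{eq:cassm}). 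A standard $\epsilon$-$K$ argument (fix $K$ large enough that each deterministic-index error probability is below $\eta/2$ for $k \ge K$, then $n$ large enough that $\Pr(N(n) < K) < \eta/2$, and restrict to the high-probability event $\{N(n)\ge K\}$) then transfers each required convergence and completes the proof.
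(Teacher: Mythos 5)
Your proof takes the same route as the paper: sandwich $E_n$ between boundary ratios evaluated at the random iteration index $e(n)$, establish via Lemmas \ref{lem:variance} and \ref{lem:prfacts} that the corresponding ratios at a \emph{deterministic} index $k$ converge in probability to $\ell$, and then transfer to the random index. The paper's own proof writes exactly the sandwich \eqref{eq:El} and the deterministic-index convergence $\sum_{i\le n} P_i / \sum_{i<n}C_i \cpr \ell$, and then simply asserts that the conclusion ``follows from \eqref{eq:El}.'' You do the paper one better by explicitly flagging the random-index transfer as the nontrivial step — which it is: if $X_k\cpr X$ and $N(n)\to\infty$ a.s., it does \emph{not} automatically follow that $X_{N(n)}\cpr X$ (take $X_k$ independent with $\Pr(X_k{=}1)=1/k$, $N(n)$ the first $k>n$ with $X_k=1$; then $X_{N(n)}\equiv 1$). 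So the concern you raise is real and shared by the paper's proof.

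However, the resolution you sketch is not yet a proof. The ``standard $\epsilon$--$K$ argument'' you invoke requires bounding $\Pr(N(n)\ge K,\,|X_{N(n)}-\ell|>\delta)$, and the only general way to do this is $\Pr(\sup_{k\ge K}|X_k-\ell|>\delta)$, i.e.\ one needs \emph{almost sure} (or at least uniform-in-tail) convergence, not merely $\cpr$. Lemma \ref{lem:variance} gives only $\Var X_k=o(1)$, hence only convergence in probability, so the $\epsilon$--$K$ scheme does not close without additional input. Your ingredient (b) — the deterministic bound $C_k\le m_k/H(\mu)$ together with \eqref{eq:cassm} giving $C_k/\sum_{i<k}c_i\to 0$ \emph{pointwise} — is genuinely useful and does dispose of the per-iteration correction terms $C_{e(n)+1}$ and $P_{e(n)+1}$ almost surely. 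But the cumulative ratios $\sum_{i\le k}C_i/\sum_{i\le k}c_i$ and $\sum_{i\le k}P_i/\sum_{i\le k}p_i$ admit no such pointwise bound; independence of the $(C_i,P_i)$ and the stopping-time property of $e(n)$ do not by themselves let you factor $\Pr(e(n)=k,\,|X_k-1|>\delta)$, since $\{e(n)=k\}$ depends on $C_0,\dots,C_k$, exactly the variables appearing in $X_k$. To actually complete the transfer you would need to upgrade Lemma \ref{lem:variance} to a summable-variance (Kolmogorov/Borel--Cantelli) bound yielding a.s.\ convergence of the cumulative ratios, or exploit the monotonicity of $e(n)$ in $n$ together with a maximal inequality. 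As stated, your proof has the same gap as the paper's, but you have correctly located where the missing work lies and correctly named the tools (independence, stopping time, deterministic per-term bounds) that a complete argument would build on.
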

\begin{proof}
The expected time in each component protocol is finite, thus $e(n)$ is unbounded with probability 1. By definition of $e(n)$, we have
\begin{align*}
\sum_{i=0}^{e(n)-1}C_{i} &\le n\cdot H(\mu) \le \sum_{i=0}^{e(n)}C_{i}
&
\sum_{i=0}^{e(n)-1}P_{i} &\le \len{\delta(s,\alpha_n)}\cdot H(\nu) \le \sum_{i=0}^{e(n)}P_{i},
\end{align*}
therefore
\begin{align}
\frac{\sum_{i=0}^{e(n)-1}P_{i}}{\sum_{i=0}^{e(n)}C_{i}}
&\le \frac{\len{\delta(s,\alpha_n)}}{n}\cdot \frac{H(\nu)}{H(\mu)} = E_n(\alpha)
\le \frac{\sum_{i=0}^{e(n)}P_{i}}{\sum_{i=0}^{e(n)-1}C_{i}}.\label{eq:El}
\end{align}
By Lemma \ref{lem:variance}, the variance conditions \eqref{eq:variance} and \eqref{eq:variance1} hold. Then by Lemma \ref{lem:prfacts}\eqref{lem:prfactsiv},
\begin{align*}
\frac{\sum_{i=0}^n C_i}{\sum_{i=0}^n c_i} &\cpr 1 &
\frac{\sum_{i=0}^n P_i}{\sum_{i=0}^n p_i} &\cpr 1 &
\frac{C_n}{\sum_{i=0}^{n-1} c_i} &\cpr 0 &
\frac{P_n}{\sum_{i=0}^{n-1} p_i} &\cpr 0.
\end{align*}
Using Lemma \ref{lem:prfacts}\eqref{lem:prfactsi}-\eqref{lem:prfactsiii}, we have
\begin{align*}
\frac{\sum_{i=0}^{n} P_i}{\sum_{i=0}^{n-1} C_i}
&=
(\frac{P_n}{\sum_{i=0}^{n-1} p_i} + \frac{\sum_{i=0}^{n-1} P_i}{\sum_{i=0}^{n-1} p_i})\cdot
\frac{\sum_{i=0}^{n-1} p_i}{\sum_{i=0}^{n-1} c_i}\cdot
\frac{\sum_{i=0}^{n-1} c_i}{\sum_{i=0}^{n-1} C_i} \cpr \ell
\end{align*}
and similarly $\sum_{i=0}^{n-1} P_i/\sum_{i=0}^{n} C_i\cpr\ell$. The conclusion $E_n \cpr \ell$ now follows from \eqref{eq:El}.
\end{proof}


\section{Reductions}
\label{sec:reduction}

In this section we present a series of reductions between distributions of certain forms. Each example defines a capacity-indexed family of reductions (\S\ref{sec:families}) given as positive recurrent restart protocols (\S\ref{sec:restart})
with efficiency tending to 1 as the parameter $k$ grows. 
By Theorem \ref{thm:opt}, each family can be made into a single serial protocol (\S\ref{sec:serial}) with asymptotically optimal efficiency, and by Theorem \ref{thm:familycomposition}, any two compatible reduction families with asymptotically optimal efficiency can be composed to form a family of reductions with asymptotically optimal efficiency. 

\subsection{Uniform $\Imp$ Uniform}
\label{sec:unif}

Let $c,d\ge 2$, the sizes of the output and input alphabets, respectively. In this section we construct a family of restart protocols with capacity proportional to $k$ mapping $d$-uniform streams to $c$-uniform streams with efficiency $1-\Theta(k^{-1})$. The Shannon entropy of the input and output distributions are $\log d$ and $\log c$, respectively.

Let $m=\floor{k\log_c d}$. Then $c^m \le d^k < c^{m+1}$. It follows that
\begin{align}
\frac 1c < \frac{c^m}{d^k} &\le 1 & 1-\frac{\log c}{k\log d} &< \frac{m\log c}{k\log d} \le 1.\label{eq:a1}
\end{align}
Let the $c$-ary expansion of $d^k$ be
\begin{align}
d^k &= \sum_{i=0}^{m} a_i c^i,\label{eq:a2}
\end{align}
where $0\le a_i \le c-1$, $a_m\neq 0$.

Intuitively, the protocol $P_k$ operates as follows. Do $k$ calls on the $d$-uniform distribution. For each $0\le i\le m$, for $a_ic^i$ of the possible outcomes, emit a $c$-ary string of length $i$, every possible such string occurring exactly $a_i$ times. For $a_0$ outcomes, nothing is emitted, and this is lost entropy, but this occurs with probability $a_0d^{-k}$. After that, restart the protocol.

Formally, this is a restart protocol with prefix code $A$ consisting of all $d$-ary strings of length $k$. For each of the $d^k$ strings $x\in A$, we specify an output string $f(x)$ to emit. Partition $A$ into $\sum_{i=0}^{m} c^i$ disjoint sets $A_{iy}$, one for each $0\le i\le m$ and $c$-ary string $y$ of length $i$, such that $\len{A_{iy}}=a_i$. The total number of strings in all partition elements is given by \eqref{eq:a2}. Set $f(x)=y$ for all $x\in A_{iy}$.

By elementary combinatorics,
\begin{align}
\sum_{i=0}^{m-1} (m - i)a_i c^i 
&\le \sum_{i=0}^{m-1} (m-i)(c-1)c^i = \frac{c(c^m-1)}{c-1} - m \le \frac{c(d^k-1)}{c-1} - m.\label{eq:a3}
\end{align}
In each run of $P_k$, the expected number of $c$-ary digits produced is
\begin{align}
\sum_{i=0}^{m} i a_i c^id^{-k}
&= d^{-k}(\sum_{i=0}^m ma_i c^i - \sum_{i=0}^{m} (m - i) a_i c^i)\nonumber\\
&\ge m - d^{-k}(\frac{c(d^k-1)}{c-1} - m) && \text{by \eqref{eq:a2} and \eqref{eq:a3}}\nonumber\\
&= m(1 + d^{-k}) - \frac c{c-1} (1-d^{-k})\nonumber\\
&\ge m - \frac c{c-1},\label{eq:a4}
\end{align}
thus the entropy production is at least $m\log c - \Theta(1)$.
The number of $d$-ary digits consumed is $k$, thus the entropy consumption is $k\log d$, which is also the
capacity. By \eqref{eq:a1} and \eqref{eq:a4}, the efficiency is at least
\begin{align*}
\frac{(m - \frac c{c-1})\log c}{k\log d} \ge 1 - \Theta(k^{-1}).
\end{align*}

The output is uniformly distributed, as there are $\sum_{i=\ell}^{m} a_i c^i$ equal-probability outcomes that produce a string of length $\ell$ or greater, and each output letter $a$ appears as the $\ell$th output letter in equally many strings of the same length, thus is output with equal probability.

\begin{example}
For $d=3$ and $c=k=2$, the prefix code $A$ would contain the nine ternary strings of length two. The binary expansion of $9$ is $1001$, which indicates that $\set{f(x)}{x\in A}$ should contain the eight binary strings of length three and the null string. 
The expected number of binary digits produced is $8/9\cdot 3 + 1/9\cdot 0 = 8/3$ and the expected number of ternary digits consumed is 2, so the production entropy is 8/3 and the consumption entropy is $2\log 3$ for an efficiency of about $.841$.
\end{example}


\subsection{Uniform $\Imp$ Rational}
\label{sec:dtorat}

Let $c,d\ge 2$. In this section, we will present a family of restart protocols $D_k$ mapping $d$-uniform streams over $\Sigma$ 
to streams over a $c$-symbol alphabet $\Gamma=\{1,\ldots,c\}$ with rational probabilities with a common denominator $e$, that is, $p_i = a_i/e$ for $i\in\Gamma$. By composing with a protocol of \S\ref{sec:unif} if necessary, we can assume without loss of generality that $e=d$, thus we assume that $p_i = a_i/d$ for $i\in\Gamma$.

Unlike the protocols in the previous section, here we emit a fixed number $k$ of symbols in each round while consuming a variable number of input symbols according to a particular prefix code $S\subs\Ss$.
The protocol $D_k$ will have capacity at most $k\log d$ and efficiency $1-\Theta(k^{-1})$, exhibiting a similar 
tradeoff to the family of \S\ref{sec:unif}.

To define $D_k$, we will construct a finite exhaustive prefix code $S$ over the source alphabet. The codewords of this prefix code will be partitioned into pairwise disjoint nonempty sets $S_y\subs\Ss$ associated with each $k$-symbol output word $y \in \Gamma^k$. All input strings in the set $S_y$ will map to the output string $y$.

Intuitively, the protocol operates as follows. Starting in the start state $s$, it reads input symbols until it has read an entire codeword, which must happen eventually since the code is exhaustive. If that codeword is in $S_y$, it emits $y$ and restarts. An example is given at the end of this section.

Let $p_y$ denote the probability of the word $y=e_1\cdots e_k$ in the output process, where $e_i\in\Gamma$, $1\le i\le k$. Since the symbols $e_i$ are chosen independently, $p_y$ is the product of the probabilities of the individual symbols. It is therefore of the form $p_y = a_y d^{-k}$, where $a_y=a_{e_1}\cdots a_{e_k}$ is an integer.

Let $m_y = \floor{\log_d{a_y}}$ and let
\begin{align*}
a_y &= \sum_{j=0}^{m_y} a_{y j}d^j
\end{align*}
be the $d$-ary expansion of $a_y$. We will choose a set of $\sum_{y\in \Gamma^k} \sum_{j=0}^{m_y} a_{y j}$ prefix-incomparable codewords and assign them to the $S_y$ so that each $S_y$ contains $a_{y j}$ codewords of length $k-j$ for each $0\le j\le m_y$. This is possible by the Kraft inequality (see \cite[Theorem 5.2.1]{CoverThomas91} or \cite[Theorem 1.6]{Adamek91}), which in this instance is
\begin{align}
\sum_{y\in \Gamma^k}\sum_{j=0}^{m_y} a_{y j}d^{-(k-j)} &\le 1.\label{eq:Kraft}
\end{align}
In fact, equality holds:
\begin{align}
\sum_{y\in \Gamma^k}\sum_{j=0}^{m_y} a_{y j}d^{-(k-j)}
&= \sum_{y\in \Gamma^k}a_{y}d^{-k}
= \sum_{y\in \Gamma^k}p_y = 1. \label{eq:Krafteq}
\end{align}
Each codeword in $S_y$ is of length at most $k$, therefore the capacity is at most $\log d^k = k\log d$.

Since the $d$ symbols of the input process are distributed uniformly, the probability that the input stream begins with a given string of length $n$ is $d^{-n}$.
So $$\Pr(y \pref \ds_k(s,\alpha)) = \Pr(\exists x \in S_y\ x \pref \alpha) =
\sum_{x \in S_y} d^{-|x|} = \sum_{j=0}^{m_y} a_{y j}d^{-(k-j)} = p_y$$ as required, and $D_k$ is indeed a reduction. Moreover, by \eqref{eq:Krafteq}, the probability that a prefix is in some $S_y$ is 1, so the code is exhaustive.

To analyze the efficiency of the simulation, we will use the following lemma.
\begin{lemma}
\label{lem:roundingdigits}
Let the $d$-ary expansion of $a$ be $\sum_{i=0}^m a_id^i$, where $m = \floor{\log_da}$. Then
\(
\left(\log_da - \frac{2d - 1}{d-1}\right)a < \left(m - \frac d{d-1}\right)a < \sum_{i=0}^m ia_id^i \le ma.
\)
\end{lemma}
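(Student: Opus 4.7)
The three inequalities decouple, so I would prove them one at a time. The rightmost, $\sum_{i=0}^m i a_i d^i \le ma$, is immediate from $i\le m$ and $a_i\ge 0$, giving $\sum_i i\,a_i d^i \le m\sum_i a_i d^i = ma$. The leftmost reduces, after dividing by $a>0$ and simplifying $\frac{2d-1}{d-1} - \frac{d}{d-1} = 1$, to the statement $\log_d a - m < 1$, which is exactly the defining property of $m=\floor{\log_d a}$.

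The middle inequality $(m - \frac{d}{d-1})\,a < \sum_{i=0}^m i a_i d^i$ is the only substantive one. I would rewrite the gap as
\[
ma - \sum_{i=0}^m i\,a_i d^i \;=\; \sum_{i=0}^{m-1}(m-i)\,a_i d^i,
\]
and then reuse the combinatorial identity already worked out in Section~\ref{sec:unif}, bounding $a_i\le d-1$ to get
\[
\sum_{i=0}^{m-1}(m-i)\,a_i d^i \;\le\; (d-1)\sum_{i=0}^{m-1}(m-i)\,d^i \;=\; \frac{d(d^m-1)}{d-1} - m.
\]
It then suffices to show $\frac{d(d^m-1)}{d-1} - m < \frac{d}{d-1}\,a$, which after rearrangement becomes $\frac{d}{d-1}(d^m - a) < m + \frac{d}{d-1}$. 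The left side is $\le 0$ since $d^m\le a$ (from $m=\floor{\log_d a}$), while the right side is strictly positive for $d\ge 2$ and $m\ge 0$, so the inequality is strict. Rearranging back gives the middle inequality.

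I do not anticipate any real obstacle: both nontrivial steps are routine geometric-sum algebra, and the key identity $(d-1)\sum_{i=0}^{m-1}(m-i)d^i = \frac{d(d^m-1)}{d-1} - m$ has already been carried out in the previous subsection, so it can simply be quoted. The only point requiring some care is to verify that the strict inequality in the middle really is strict even in the edge case $a=d^m$ (when the first $\le$ is an equality because all lower $a_i$ vanish); the strictness then has to come from the additive slack $m+\frac{d}{d-1}>0$ in the final estimate, which is what makes the argument go through uniformly.
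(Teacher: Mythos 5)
Your proof is correct and follows essentially the same route as the paper: both bound $\sum_{i<m}(m-i)a_id^i$ by $(d-1)\sum_{i<m}(m-i)d^i = \frac{d(d^m-1)}{d-1}-m$, then use $d^m\le a$ to conclude this is strictly less than $\frac{da}{d-1}$, and dispose of the leftmost inequality via $\log_d a - 1 < m$. The only differences are cosmetic---you decouple the three inequalities explicitly and spell out the rearrangement $\frac{d}{d-1}(d^m-a)<m+\frac{d}{d-1}$ that the paper leaves as an unannotated `$<$' in its chain.
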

\begin{proof}
By elementary combinatorics,
\begin{align*}
\sum_{i=0}^{m-1} (m - i)a_id^i &\le \sum_{i=0}^{m-1} (m - i)(d-1)d^i = \frac{d(d^m-1)}{d-1} - m 
< \frac{da}{d-1}.
\end{align*}
Then
\begin{align*}
ma = \sum_{i=0}^m ma_id^i
&\ge \sum_{i=0}^m ia_id^i = ma - \sum_{i=0}^{m-1} (m - i)a_id^i\\
&> \left(m - \frac d{d-1}\right)a = \left(\floor{\log_da} - \frac d{d-1}\right)a\\
&> \left(\log_da - 1 - \frac d{d-1}\right)a = \left(\log_da - \frac{2d - 1}{d-1}\right)a.
\qedhere
\end{align*}
\end{proof}

The expected number of symbols consumed leading to the output $y$ is
\begin{align*}
\sum_{x\in S_y} d^{-\len x}\cdot\len x &= \sum_{j=0}^{m_y} a_{y j}d^{j-k}(k-j)
= kp_y - \sum_{j=0}^{m_y} ja_{y j}d^{j-k}
= kp_y - d^{-k}\sum_{j=0}^{m_y} ja_{y j}d^{j}\\
&< kp_y - d^{-k}\left(\log_d{a_y} - \frac{2d - 1}{d-1}\right)a_y \quad\quad\text{by Lemma \ref{lem:roundingdigits}}\\
&= \frac{2d - 1}{d-1}a_yd^{-k} - p_y\log_d d^{-k} - a_yd^{-k}\log_d{a_y}\\
&= \frac{2d - 1}{d-1}p_y - p_y\log_d{p_y}.
\end{align*}
Thus the expected number of input symbols consumed in one iteration is
\begin{align*}
\sum_{y\in \Gamma^k}\sum_{x\in S_y} d^{-\len x}\cdot\len x 
&<
\sum_{y\in \Gamma^k} \left(\frac{2d - 1}{d-1}p_y - p_y\log_d{p_y}\right)
=
\frac{2d - 1}{d-1} + \frac{H(p_y : y\in\Gamma^k)}{\log d}
\end{align*}
and as the uniform distribution has entropy $\log d$, the expected consumption of entropy is at most
\begin{align}
H(p_y : y\in\Gamma^k) + \log d\cdot\frac{2d - 1}{d-1}
= kH(\seq p1c) + \Theta(1).\label{eq:singlecomponent}
\end{align}
The number of output symbols is $k$, so the production of entropy is $kH(\seq p1c)$. Thus the efficiency is at least
\begin{align*}
\frac{kH(\seq p1c)}{\displaystyle kH(\seq p1c) + \Theta(1)} 
&= \frac{1}{1 + \Theta(k^{-1})} = 1 - \Theta(k^{-1}).
\end{align*}

\begin{example}
Suppose the input distribution is uniform over an alphabet of size $d=24$
and the output alphabet is $u,v,w$ with probabilities $5/24$, $7/24$, and $1/2$ ($=12/24$), respectively.
For $k=2$, there are nine output strings
$uu$, $uv$, $uw$, $vu$, $vv$, $vw$, $wu$, $wv$, $ww$. The string $y$ should be emitted with probability $a_y/24^2$,
where the values of $a_y$ are
25, 35, 60,
35, 49, 84,
60, 84, 144,
respectively. Writing the $a_y$ in base $24$ gives
\begin{align*}
\begin{array}{c@{\hspace{18pt}}c@{\hspace{18pt}}c@{\hspace{18pt}}c@{\hspace{18pt}}c@{\hspace{18pt}}c@{\hspace{18pt}}c@{\hspace{18pt}}c@{\hspace{18pt}}c}
1\hspace{4pt}1 & 1\hspace{4pt}11 & 2\hspace{4pt}12 & 1\hspace{4pt}11 & 2\hspace{4pt}1 & 3\hspace{4pt}12 & 2\hspace{4pt}12 & 3\hspace{4pt}12 & 6\hspace{4pt}0
\end{array}
\end{align*}
respectively. Summing the base-24 digits in 24's place and in 1's place, we obtain 21 and 72, respectively, which means that we need 21 input strings of length one and 72 input strings of length two. As guaranteed by the Kraft inequality \eqref{eq:Kraft}, we can construct an exhaustive prefix code with these parameters, say by taking all 72 strings of length two extending some three strings of length one, along with the remaining 21 strings of length one.

Now we can apportion these to the output strings to achieve the desired probabilities. For example, $uv$ should be emitted with probability $a_{uv}/24^2 = 35/576$, and 35 is $1\;11$ in base 24, which means it should be allocated one input string of length one and 11 input strings of length two. This causes $uv$ to be emitted with the desired probability $1\cdot 24^{-1} + 11\cdot 24^{-2} = 35/576$.

The expected number of input letters consumed in one round is $2\cdot 3/24 + 1\cdot 21/24 = 9/8$ and the entropy of the input distribution is $\log_2 24 \approx 4.59$ for a total entropy consumption of $5.16$ bits. The expected number of output letters produced is $2$ and the entropy of the output distribution is $- \frac 5{24}\log_2 \frac 5{24} - \frac 7{24}\log_2 \frac 7{24} - \frac 12\log_2\frac 12 \approx 1.49$ for a total entropy production of $2.98$ bits. The efficiency is the ratio $2.98/5.16 \approx 0.58$.
\end{example}


\subsection{Uniform $\Imp$ Arbitrary}
\label{sec:dtoany}

Now suppose the target distribution is over an alphabet $\Gamma=\{1,\ldots,c\}$ with arbitrary real probabilities $\seq {p^{(0)}}1c$. It is of course hopeless in general to construct a finite-state protocol with the correct output distribution, as there are only countably many finite-state protocols but uncountably many distributions on $c$ symbols. However, we are able to construct a family of infinite-state restart protocols $D_k$ that map the uniform distribution over a $d$-symbol alphabet $\Sigma$ to the distribution $(p^{(0)}_i : 1\le i\le c)$ with efficiency $1 - \Theta(k^{-1})$. If the probabilities $p^{(0)}_i$ are rational, the resulting protocols $D_k$ will be finite.

Although our formal notion of capacity does not apply to infinite-state protocols, we can still use $k$ as a tunable parameter to characterize efficiency. The restart protocols $D_k$ constructed in this section consist of a serial concatenation of infinitely many component protocols $D_k^{(n)}$, each of capacity $k$. Moreover, $D_k$ is computable if the $p^{(0)}_i$ are, or under the assumption of unit-time real arithmetic; that is, allowing unit-time addition, multiplication, and comparison of arbitrary real numbers.

We assume that $d > c$, which implies that $\max_i p^{(0)}_i > 1/d$. This will ensure that each component has a nonzero probability of emitting at least one output symbol. If $d$ is too small, we can precompose with a protocol from \S\ref{sec:unif} to produce a uniform distribution over a larger alphabet. By Theorem \ref{thm:composition2}, this will not result in a significant loss of efficiency.

The $n$th component $D_k^{(n)}$ of $D_k$ is associated with a real probability distribution $(p^{(n)}_y : y\in\Gamma^k)$ on $k$-symbol output strings. These distributions will be defined inductively. As there are countably many components, in general there will be countably many such distributions, although the sequence will cycle if the original probabilities $p^{(0)}_i$ are rational. The initial component $D_k^{(0)}$ is associated with the target distribution extended to $k$-symbol strings $(p^{(0)}_y : y\in\Gamma^k)$ with $p^{(0)}_{e_1\cdots e_k} = p^{(0)}_{e_1}\cdots p^{(0)}_{e_k}$.

Intuitively, $D_k^{(n)}$ works the same way as in \S\ref{sec:dtorat} using a best-fit rational subprobability distribution $(q^{(n)}_y : y\in\Gamma^k)$ with denominator $d^k$ such that $q^{(n)}_y\le p^{(n)}_y$. If a nonempty string is emitted, the protocol restarts. Otherwise, it passes to $D_k^{(n+1)}$ to handle the residual probabilities. We show that the probability of output in every component is bounded away from 0, so in expectation a finite number of components will be visited before restarting.

Formally, we define $a^{(n)}_y = \floor{p^{(n)}_y d^k}$ and $q^{(n)}_y = a^{(n)}_y d^{-k} \leq p^{(n)}_y$. The rounding error is $p^{(n)}_y-q^{(n)}_y < d^{-k}$. The $q^{(n)}_y$ may no longer sum to 1, and the difference is the \emph{residual probability} $r^{(n)} = 1 - \sum_{y\in \Gamma^k} q^{(n)}_y < (c/d)^k$.

Let $m^{(n)}=r^{(n)}d^k$. As in \S\ref{sec:dtorat}, since 
\begin{align*}
r^{(n)} + \sum_{y\in \Gamma^k} q^{(n)}_y = m^{(n)}d^{-k} + \sum_{y\in \Gamma^k} a^{(n)}_yd^{-k} = 1,
\end{align*}
by the Kraft inequality \eqref{eq:Krafteq}, we can construct an exhaustive prefix code based on the $d$-ary expansions of $m^{(n)}$ and the $a^{(n)}_y$ for $y\in\Gamma^k$ and apportion the codewords to sets $S_y$ and $S_m$ such that the probability of encountering a codeword in $S_y$ is $q^{(n)}_y$ and the probability of encountering a codeword in $S_m$ is $r^{(n)}$. If the protocol encounters a codeword in $S_y$, it emits $y$ and restarts at the start state of $D_k^{(0)}$. If the protocol encounters a codeword in $S_m$, it emits nothing and transitions to the start state of $D_k^{(n+1)}$. The component $D_k^{(n+1)}$ works the same way using the \emph{residual distribution} $(p^{(n+1)}_y : y\in\Gamma^k)$, where
\begin{align*}
p^{(n+1)}_y = \frac{p^{(n)}_y - q^{(n)}_y}{r^{(n)}},
\end{align*}
the (normalized) probability lost when rounding down earlier. An example is given at the end of this section.

To show that the protocol is correct, we need to argue that the string $y$ is emitted with probability $p_y^{(0)}$. It is emitted in $D_k^{(n)}$ with probability $q^{(n)}_y\prod_{j=0}^{n-1}r^{(j)}$, and these are disjoint events, so the probability that $y$ is emitted in any component is $\sum_{n\ge 0} q^{(n)}_y\prod_{j=0}^{n-1}r^{(j)}$.
Using the fact that $q^{(n)}_y=p^{(n)}_y - r^{(n)}p^{(n+1)}_y$,
\begin{align*}
\sum_{n\ge 0} q^{(n)}_y\prod_{j=0}^{n-1}r^{(j)}
&= \sum_{n\ge 0} (p^{(n)}_y - r^{(n)}p^{(n+1)}_y)\prod_{j=0}^{n-1}r^{(j)}\\
&= \sum_{n\ge 0} p^{(n)}_y\prod_{j=0}^{n-1}r^{(j)}
- \sum_{n\ge 0} p^{(n+1)}_y\prod_{j=0}^{n}r^{(j)}
\ =\ p_y^{(0)}.
\end{align*}

We now analyze the production and consumption in one iteration of $D_k$. As just argued, each iteration produces $y\in\Gamma^k$ with probability $p_y^{(0)}$, therefore the entropy produced in one iteration is $H(p_y^{(0)} : y\in\Gamma^k) = kH(\seq{p^{(0)}}1c)$.

To analyze the consumption, choose $k$ large enough that $(c/d)^k \le e^{-1}$ and $(\max_i p_i^{(0)})^k \le e^{-1}$. These assumptions will be used in the following way. If $q\le p \le e^{-1}$, then $-q \log q \le -p\log p$, which can be seen by observing that the derivative
of $-p\log p$
is positive below $e^{-1}$. Thus for any pair of subprobability distributions $(q_n : n\in N)$ and $(p_n : n\in N)$ such that $q_n\le p_n\le e^{-1}$ for all $n\in N$,
\begin{align}
H(q_n : n\in N) \le H(p_n : n \in N).\label{eq:ent4}
\end{align}

Let $s$ be the start state of $D_k$.
Let $V^{(n)}\subs \Sigma^{\omega}$ be the event that the protocol visits the $n$th component $D_k^{(n)}$, and let $U^{(n)}=V^{(n)}\setminus V^{(n+1)}$, the event that the protocol emits a string during the execution of $D_k^{(n)}$. Let $C_k$ be a random variable representing the total consumption in one iteration of $D_k$, and let $C_{kn}$ be a random variable for the portion of $C_k$ that occurs during the execution of $D_k^{(n)}$; that is, $C_{kn}(\alpha)$ is $H(\mu)$ times the number of input symbols read during the execution of $D_k^{(n)}$ on input $\alpha\in\Sigma^\omega$. Note that $C_{kn}(\alpha)=0$ if $\alpha\in U^{(m)}$ and $n>m$, since on input $\alpha$, the single iteration of $D_k$ finishes before stage $n$; thus for $n>m$, the conditional expectation $\Exp{C_{kn} \mid U^{(m)}}=0$.

The expected consumption during a single iteration of $D_k$ is
\begin{align*}
\Exp{C_k}
&= \sum_{m\ge 0} \Pr(U^{{(m)}})\,\Exp{C_k \mid U^{(m)}}
= \sum_{m\ge 0} \Pr(U^{{(m)}})\sum_{n\ge 0}\Exp{C_{kn} \mid U^{(m)}}\\
&= \sum_{n\ge 0}\sum_{m\ge n} \Pr(U^{{(m)}})\,\Exp{C_{kn} \mid U^{(m)}}\\
&= \sum_{n\ge 0} \Pr(\bigcup_{m\ge n} U^{{(m)}})\,\Exp{C_{kn} \mid \bigcup_{m\ge n}U^{{(m)}}}\\
&= \sum_{n\ge 0} \Pr(V^{{(n)}})\,\Exp{C_{kn} \mid V^{{(n)}}}.
\end{align*}
Since $r^{(n)} \le (c/d)^k$,
\begin{align*}
\Pr(V^{(n)}) = \prod_{j=0}^{n-1} r^{(j)} \le (\frac cd)^{kn}.
\end{align*}
The quantity $\Exp{C_{kn} \mid V^{{(n)}}}$ is the expected consumption during the execution of $D_k^{(n)}$, conditioned on the event that $D_k^{(n)}$ is visited. By \eqref{eq:singlecomponent}, this is at most
\begin{align*}
H(q_y^{(n)} : y\in\Gamma^k) - r^{(n)}\log r^{(n)} + \Theta(1)
&= H(q_y^{(n)} : y\in\Gamma^k) + \Theta(1),
\end{align*} 
since $r^{(n)}\le (c/d)^k \le e^{-1}$, therefore $- r^{(n)}\log r^{(n)} \le -e^{-1}\log e^{-1}=\Theta(1)$.
Using the naive upper bound $H(q_y^{(n)} : y\in\Gamma^k)\le k\log c$ from the uniform distribution for $n\ge 1$, we have
\begin{align*}
\cdot\Exp{C_k}
&= \sum_{n\ge 0} \Pr(V^{{(n)}})\,\Exp{C_{kn} \mid V^{{(n)}}}\\
&\le \sum_{n\ge 0}(\frac cd)^{kn}(H(q_y^{(n)} : y\in\Gamma^k) + \Theta(1))\\
&= \sum_{n\ge 1}(\frac cd)^{kn}H(q_y^{(n)} : y\in\Gamma^k) + H(q_y^{(0)} : y\in\Gamma^k) + \Theta(1)\\
&\le \frac{(c/d)^k}{(1-(c/d)^k)}k\log c + H(q_y^{(0)} : y\in\Gamma^k) + \Theta(1)\\
&\le kH(\seq{p^{(0)}}1c) + \Theta(1).
\end{align*}
The last inference holds because
\begin{align*}
H(q_y^{(0)} : y\in\Gamma^k) \le H(p_y^{(0)} : y\in\Gamma^k) = kH(\seq{p^{(0)}}1c)
\end{align*}
as justified by \eqref{eq:ent4}.

The efficiency is the ratio of production to consumption, which is at least
\begin{align*}
\frac{kH(\seq{p^{(0)}}1c)}{kH(\seq{p^{(0)}}1c) + \Theta(1)}
&= \frac{1}{1 + \Theta(k^{-1})}
= 1 - \Theta(k^{-1}).
\end{align*}

There is still one issue to resolve if we wish to construct a serial protocol with $k$th component $D_k$. As $D_k$ is not finite-state, its consumption is not uniformly bounded by any $m_k$, as required by Lemma \ref{lem:variance}. However, one iteration of $D_k$ visits a series of components, and the consumption in each component is uniformly bounded. In each component, if output is produced, the protocol restarts from the first component, otherwise the computation proceeds to the next component. Each component, when started in its start state, consumes at most $k$ digits and produces exactly $k$ digits with probability at least $1-(c/d)^k$ and produces no digits with probability at most $(c/d)^k$. The next lemma shows that this is enough to derive the conclusion of Lemma \ref{lem:variance}.

\begin{lemma}
Let $m_k$ be a uniform bound on the consumption in each component $D_k^{(n)}$ of one iteration of $D_k$. If the $m_k$ satisfy \eqref{eq:cassm}, then the variances \eqref{eq:variance} and \eqref{eq:variance1} vanish in the limit.
\end{lemma}
\begin{proof}
As above, let $C_k$ be a random variable for the consumption in one iteration of $D_k$, let $C_{kn}$ be a random variable for the consumption in $D_k^{(n)}$, and let $U^{(n)}$ be the event that $D_k^{(n)}$ produces a nonnull string. Again using the fact that $C_{kn}$ restricted to $U^{(m)}$ is $0$ for $n>m$,
\begin{align*}
\Exp{C_k^2 \mid U^{(m)}}
&= \Exp{(\sum_{n\ge 0} C_{kn})^2 \mid U^{(m)}}\\
&= \Exp{\sum_{n\ge 0} C_{kn}^2 \mid U^{(m)}} + 2\Exp{\sum_{0\le n < \ell\le m} C_{kn}C_{k\ell} \mid U^{(m)}}\\
&\le m_k\Exp{C_{k} \mid U^{(m)}} + 2m_k^2\binom{m+1}2.
\end{align*}
Then $\Pr(U^{(m)})\le\Pr(V^{(m)})\le(c/d)^{km}$ and
\begin{align*}
\Exp{C_k^2} &= \sum_{m\ge 0} \Pr(U^{(m)})\cdot\Exp{C_k^2 \mid U^{(m)}}\\
&\le m_k\sum_{m\ge 0}\Pr(U^{(m)})\Exp{C_{k} \mid U^{(m)}} + 2m_k^2\sum_{m\ge 0} \Pr(U^{(m)})\binom{m+1}2\\
&\le m_k\Exp{C_k} + 2m_k^2\sum_{m\ge 0} (\frac cd)^{km}\frac{(m+1)m}2\\
&= m_kc_k + 2m_k^2(c/d)^k(1-(c/d)^k)^{-3}
= m_kc_k + o(1),
\end{align*}
\begin{align*}
\Var{\frac{\sum_{k=0}^n C_k}{\sum_{k=0}^n c_k}}
&= \frac{\sum_{k=0}^n \Var{C_k}}{(\sum_{k=0}^n c_k)^2}
\le \frac{\sum_{k=0}^n \Exp{C_k^2}}{(\sum_{k=0}^n c_k)^2}\\
&\le \frac{m_n}{\sum_{k=0}^n c_k}\cdot\frac{\sum_{k=0}^n c_k}{\sum_{k=0}^n c_k} + o(1)
= o(1).
\qedhere
\end{align*}
\end{proof}

\begin{example}
Consider the case $d=6$, $c=2$, and $k=1$ in which the output letters $u,v$ should be emitted with probability $p$ and $1-p$, respectively. The input distribution is a fair six-sided die. We will try to find a best-fit rational distribution with denominator $6$.

In the first component, we roll the die with result $n\in\{1,\ldots,6\}$ and emit $u$ if $n/6\le p$ and $v$ if $(n-1)/6\ge p$. Thus $u$ is emitted with probability $\floor{6p}/6\le p$ and $v$ with probability $1-\ceil{6p}/6\le 1-p$. If $p\in\{1/6,2/6,\ldots,6/6\}$, then exactly one of those two events occurs. In this case there are no further components, as $u$ and $v$ have been emitted with the desired probabilities $p$ and $1-p$, respectively; the residual probabilities are 0. The protocol restarts in the start state of the first component.

Otherwise, if $p\not\in\{1/6,2/6,\ldots,6/6\}$, then $u$ and $v$ are emitted with probability $\floor{6p}/6<p$ and $1-\ceil{6p}/6<1-p$ respectively, and nothing is emitted with probability $1/6$, which happens when $n=\ceil{6p}$. In the event nothing is emitted, we move on to the second component, which is exactly like the first except with the residual probabilities $p'=6(p-\floor{6p}/6)=6p-\floor{6p}$ and $1-p'=6(\ceil{6p}/6-p)=6\ceil{6p}-6p$. The factor $6$ appears because we are conditioning on the event that no symbol was emitted in the first component, which occurs with probability 1/6. We continue in this fashion as long as there is nonzero residual probability.

For a concrete instance, suppose $p=16/215$ and $1-p=199/215$. Since $p$ falls in the open interval $(0,1/6)$, we will emit $v$ if the die roll is 2, 3, 4, 5, or 6 and emit nothing if the die roll is 1. In the latter event, we move on to the second component using the residual probabilities $p'=6\cdot 16/215-\floor{6\cdot 16/215}=96/215$ and $1-p'=119/215$. Since $p'\in(1/3,1/2)$, we will emit $u$ if the die roll is 1 or 2, $v$ if it is 4, 5, or 6, and nothing if it is 3. In the last event, we move on to the third component using the residual probabilities $p''=6\cdot 96/215 - \floor{6\cdot 96/215}=146/215$ and $1-p'=69/215$. Since $p''\in(2/3,5/6)$, we will emit $u$ if the die roll is 1, 2, 3, or 4, $v$ if it is 6, and nothing if it is 5. In the last event, we move on to the fourth component using the residual probabilities $p'''=6\cdot 146/215 - \floor{6\cdot 146/215}=16/215$ and $1-p'''=199/215$.

Note that after three components, we have $p'''=p$, so the fourth component is the same as the first. We are are back to the beginning and can return to the first component. In general, the process will eventually cycle iff the probabilities are rational. This gives an alternative to the construction of \S\ref{sec:dtorat}. Note also that at this point, $u$ has been emitted with probability $(1/6)(2/6) + (1/6^2)(4/6)=2/17=16/216$ and $v$ with probability $5/6 + (1/6)(3/6) + (1/6^2)(1/6) = 199/216$. These numbers are proportional to $p$ and $1-p$, respectively, out of $215/216$, the probability that some symbol has been emitted.
\end{example}


\subsection{Arbitrary $\Rightarrow$ $c$-Uniform with Efficiency $1-\Theta(\log k/k)$}
\label{sec:arb2unif}

In this section, we describe a family of restart protocols $B_k$ for transforming an arbitrary $d$-ary distribution with real probabilities $\seq p1d$ to a $c$-ary uniform distribution with $\Theta(\log k/k)$ loss. Unlike the other protocols we have seen so far, these protocols do not depend on knowledge of the input distribution; perhaps as a consequence of this, the convergence is asymptotically slower by a logarithmic factor.

Let $D=\{1,\ldots,d\}$ be the input alphabet. Let $G_k$ be the set of all sequences $\sigma\in\naturals^D$ such that $\sum_{i\in D}\sigma_i=k$. Each string $y\in D^k$ is described by some $\sigma\in G_k$, where $\sigma_i$ is the number of occurrences of $i\in D$ in $y$. Let $V_\sigma$ be the set of strings in $D^k$ whose letter counts are described by $\sigma$ in this way.

The protocol $B_k$ works as follows. Make $k$ calls on the input distribution to obtain a $d$-ary string of length $k$. The probability that the string is in $V_\sigma$ is $q_\sigma = \len{V_\sigma}p_\sigma$, where
\begin{align*}
\len{V_\sigma} &= \binom k{\sigma_1\ \ldots\ \sigma_d} & p_\sigma &= \prod_{i\in D} p_i^{\sigma_i},
\end{align*}
as there are $\len{V_\sigma}$ strings in $D^k$ whose letter counts are described by $\sigma$, each occurring with probability $p_\sigma$. Thus the
strings in $V_\sigma$ are distributed uniformly. For each $\sigma$, apply the protocol $P_k$ of \S\ref{sec:unif} to the elements of $V_\sigma$ to produce $c$-ary digits, thereby converting the $\len{V_\sigma}$-uniform distribution on $V_\sigma$ to a $c$-uniform distribution.
The states then just form the $d$-ary tree of depth $k$ that stores the input, so the capacity of $B_k$ is approximately $k\log d$. 

To analyze the efficiency of $B_k$, we can reuse an argument from \S\ref{sec:unif}, with the caveat that the size of the input alphabet was a constant there, whereas $\len{V_\sigma}$ is unbounded. Nevertheless, we were careful in \S\ref{sec:unif} that the part of the argument that we need here did not depend on that assumption.

For each $\sigma$, let $m=\floor{\log_c \len{V_\sigma}}$ and let the $c$-ary expansion of $\len{V_\sigma}$ be
\begin{align*}
\len{V_\sigma} &= \sum_{i=0}^{m} a_i c^i,
\end{align*}
where $0\le a_i \le c-1$ and $a_m\neq 0$. It was established in \S\ref{sec:unif}, equation \eqref{eq:a4}, that the expected number of $c$-ary digits produced by strings in $V_\sigma$ is at least
\begin{align*}
\floor{\log_c \len{V_\sigma}} - \frac{c}{c-1}
&\ge (\log_c\len{V_\sigma}-1) - \frac{c}{c-1}
= \log_c\len{V_\sigma} - \frac{2c-1}{c-1},
\end{align*}
thus the expected number of $c$-ary digits produced in all is at least
\begin{align*}
\sum_\sigma q_\sigma \log_c \len{V_\sigma} - \frac{2c-1}{c-1}.
\end{align*}
The total entropy production is this quantity times $\log c$, or
\begin{align*}
& \sum_\sigma q_\sigma \log_2\len{V_\sigma} - b,
\end{align*}
where $b = (2c-1)\log c/(c-1)$.

The total entropy consumption is $kH(\seq p1d)$. This can be viewed as the composition of a random choice that chooses the number of occurrences of each input symbol followed by a random choice that chooses the arrangement of the symbols. Using the conditional entropy rule (Lemma \ref{lem:condent}),
\begin{align}
kH(\seq p1d) &= H(q_\sigma \mid \sigma\in G_k) + \sum_\sigma q_\sigma\log \len{V_\sigma}\nonumber\\
&\le \log\binom{k+d-1}{k} + \sum_\sigma q_\sigma\log \len{V_\sigma}\label{eq:arbunif1}\\
&\le d\log k + \sum_\sigma q_\sigma\log \len{V_\sigma},\label{eq:arbunif2}
\end{align}
for sufficiently large $k$.
The inequality \eqref{eq:arbunif1} comes from the fact that the entropy of the uniform distribution exceeds the entropy of any other distribution on the same number of letters.
The inequality \eqref{eq:arbunif2} comes from the fact that the binomial expression is bounded by $(k+1)^{d-1}$, which is bounded by $k^d$ for all $k$ such that $k\ln k\ge d-1$.

Dividing, the production/consumption ratio is
\begin{align*}
\frac{\sum_\sigma q_\sigma \log \len{V_\sigma} - b}{kH(\seq p1d)}
&= \frac{d\log k + \sum_\sigma q_\sigma \log \len{V_\sigma}}{kH(\seq p1d)} - \frac{d\log k + b}{kH(\seq p1d)}\\
&\ge \frac{d\log k + \sum_\sigma q_\sigma \log \len{V_\sigma}}{d\log k + \sum_\sigma q_\sigma \log \len{V_\sigma}} - \frac{d\log k + b}{kH(\seq p1d)}\\
&= 1 - \Theta(\log k/k).
\end{align*}

\begin{example}
Consider the case $d=3$, $c=2$, and $k=5$ in which the three-letter input alphabet is $u,v,w$ with probabilities $p,q,r$ respectively, and the output distribution is a fair coin. We partition the input strings of length five into disjoint classes $V_\sigma$ depending on the number of occurrences of each input symbol. There are $\binom{5+2}{2}=21$ classes represented by the patterns in the following table:
\begin{align*}
\begin{array}{|c|c|c|c|c|c|}\hline
& \textsl{pattern} & \textsl{no.\ of instances} & \textsl{no.\ of classes} & \textsl{production} & \textsl{probability}\\\hline\hline
1 & 4,1 & 5 & 6 & 8/5 & 30\cdot 3^{-5}\\ \hline
2 & 3,2 & 10 & 6 & 13/5 & 60\cdot 3^{-5}\\ \hline
3 & 3,1,1 & 20 & 3 & 18/5 & 60\cdot 3^{-5}\\\hline
4 & 2,2,1 & 30 & 3 & 49/15 & 90\cdot 3^{-5}\\\hline
5 & 5 & 1 & 3 & 0 & 3\cdot 3^{-5}\\\hline
\end{array}
\end{align*}
Row 1 represents classes consisting of all strings with four occurrences of one letter and one occurrence of another. Each such class has five instances, depending on the arrangement of the letters. For example, the five instances of strings containing four occurrences of $u$ and one of $v$ are $uuuuv$, $uuuvu$, $uuvuu$, $uvuuu$, and $vuuuu$. Each of these five instances occurs with the same probability $p^4q$, so this class can be used as a uniformly distributed source over a five-letter alphabet. There are six classes of this form, corresponding to the six choices of two letters.

Similarly, row 2 represents classes with three occurrences of one letter and two of another. Each such class has $\binom{5}{3\ 2}=10$ instances, all of which occur with the same probability, so each such class can be used as a uniformly distributed source over a ten-letter alphabet. There are six such classes. The classes of rows 3 and 4 can be used as uniformly distributed sources over 20- and 30-letter alphabets, respectively. The classes in row 5 have only one instance and are not usable.

In each round of the protocol, we sample the input distribution five times. Depending on the class of the resulting string, we apply one of the protocols of \S\ref{sec:unif} to convert to fair coin flips. For example, if the string is in one of the classes from row 3 above, which is a uniform source on $\binom{5}{3\,1\,1}=20$ letters, writing 20 in binary gives 10100, indicating that 16 of the 20 instances should emit the 16 binary strings of length four, and the remaining four instances should emit 00, 01, 10, and 11, respectively. The expected production is $4\cdot 16/20 + 2\cdot 4/20 = 18/5$. This will be the production for any class in row 3, which will transpire with probability
\begin{align*}
20(p^3qr + pq^3r + pqr^3),
\end{align*}
the probability that the input string falls in a class in row 3.

The last column of the table lists these values for the case $p=q=r=1/3$.
In that case, the total consumption is $5\log 3\approx 7.92$. The production is
\begin{align*}
& 30\cdot 3^{-5}\cdot 8/5
+ 60\cdot 3^{-5}\cdot 13 / 5
+ 60\cdot 3^{-5}\cdot 18/5
+ 90\cdot 3^{-5}\cdot 49/15\ \approx\ 2.94,
\end{align*}
for an efficiency of $2.94/7.92 \approx 0.37$.
\end{example}

There is much wasted entropy with this scheme for small values of $k$. The sampling can be viewed as a composition of a first stage that selects the class, followed by a stage that selects the string within the class. All of the entropy consumed in the first stage is wasted, as it does not contribute to production.


\subsection{$(\frac 1r,\frac{r-1}r)$ $\Rightarrow$ $(r-1)$-Uniform with Efficiency $1-\Theta(k^{-1})$}
\label{sec:coin2unif}

Let $r\in\naturals$, $r\ge 3$. In this section we show that a coin with bias $1/r$ can generate an $(r-1)$-ary uniform distribution with $\Theta(k^{-1})$ loss of efficiency. This improves the result of the previous section in this special case.

Dirichlet's approximation theorem (see \cite{Cassels57,Lang95,Schmidt96}) states that for irrational $u$, there exist infinitely many pairs of integers $k,m > 0$ such that $\len{ku - m} < 1/k$. We need the result in a slightly stronger form.
\begin{lemma}
\label{lem:dense}
Let $u$ be irrational. For infinitely many integers $k > 0$,
$ku - \floor{ku} < \frac 1{k+1}$.
\end{lemma}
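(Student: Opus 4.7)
The plan is to prove this as a one-sided refinement of Dirichlet's theorem by invoking the theory of simple continued fractions. Writing $u = [a_0; a_1, a_2, \ldots]$ with convergents $p_n/q_n$, I would appeal to three standard facts (see e.g.\ \cite{Cassels57}): the denominators $q_n$ form a strictly increasing sequence of positive integers for $n \geq 1$; the error bound $\len{q_n u - p_n} < 1/q_{n+1}$ holds; and the signs alternate, with the even-indexed convergents $p_{2n}/q_{2n}$ lying strictly below $u$ and the odd-indexed ones strictly above.

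The key step is to restrict attention to even-indexed convergents and set $k := q_{2n}$, $p := p_{2n}$. The sign-alternation fact gives $ku - p > 0$, so the absolute value in the error bound may be dropped: $ku - p < 1/q_{2n+1}$. Since $q_{2n+1}$ and $q_{2n}$ are distinct positive integers with $q_{2n+1} > q_{2n}$, one has $q_{2n+1} \geq k+1$. Assembling,
\begin{align*}
0 < ku - p < \frac{1}{k+1} < 1,
\end{align*}
and the last inequality forces $p = \floor{ku}$. This yields $ku - \floor{ku} < 1/(k+1)$, and letting $n$ range over $\naturals$ produces infinitely many such $k$, since $q_{2n} \to \infty$.

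The strengthening over the textbook form of Dirichlet's theorem consists in replacing a two-sided approximation $\len{ku - m} < 1/k$ with a one-sided one pinning $m$ to be $\floor{ku}$. The only real obstacle is controlling this side: an arbitrary Dirichlet approximant can land on either side of $ku$, and there is no a priori reason for the ``below'' case to occur infinitely often. Continued fractions resolve this cleanly, because their convergents are known to bracket $u$ alternately from below and above, so the even convergents automatically give the correct direction for $\floor{\cdot}$.
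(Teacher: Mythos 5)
Your proof is correct but takes a genuinely different route from the paper's. The paper builds an elementary, self-contained argument in the spirit of the three-distance theorem: it places $u, 2u, 3u, \ldots$ modulo $1$ into the unit interval one at a time, observes that among the $k+1$ gaps present at step $k$ one must have length strictly less than $1/(k+1)$, shows that a minimal-length gap always sits adjacent to $0$ or $1$, and proves that the side on which a new minimum is created alternates; reading off the steps where the new minimum abuts $0$ yields the conclusion. You instead appeal to continued-fraction machinery: with $k = q_{2n}$, the even-indexed convergent $p_{2n}/q_{2n}$ lies strictly below $u$, so $0 < ku - p_{2n} < 1/q_{2n+1}$, and since $q_{2n+1} > q_{2n}$ for $n \geq 1$ (hence $q_{2n+1} \geq k+1$) this forces $p_{2n} = \floor{ku}$ and $ku - \floor{ku} < 1/(k+1)$. (At $n=0$ the strict inequality $q_1 > q_0$ can fail when the partial quotient $a_1 = 1$, but discarding that single index is harmless since only infinitely many $k$ are needed.) Your route is shorter and reuses standard, well-documented facts at the cost of importing that background; the paper's argument is longer but elementary and self-contained, in effect reconstructing exactly the fragment of continued-fraction theory it needs. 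Unsurprisingly the two approaches single out essentially the same good denominators $k$, since the gap structure the paper tracks is precisely what the continued-fraction expansion of $u$ governs.
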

\begin{proof}
The numbers $ku - \floor{ku}$, $k \ge 1$, are all distinct since $u$ is irrational. In the following, we use real arithmetic modulo 1, thus we write $iu$ for $iu-\floor{iu}$ and $0u$ for both $0$ and $1$.

Imagine placing the elements $u,2u,3u,\ldots$ in the unit interval one at a time, $iu$ at time $i$. At time $k$, we have placed $k$ elements, which along with 0 and 1 partition the unit interval into $k+1$ disjoint subintervals. We make three observations:
\begin{enumerate}
\romanize
 \item At time $k$, the smallest interval is of length less than $\frac 1{k+1}$.
 \item An interval of minimum length always occurs adjacent to $0$ or $1$.
 \item Let $k_0,k_1,k_2,\ldots$ be the times at which the minimum interval length strictly decreases. For all $i$, the new smallest interval created at time $k_i$ is adjacent to 0 iff the new smallest interval created at time $k_{i+1}$ is adjacent to 1.
\end{enumerate}
For (i), the average interval length is $\frac 1{k+1}$, so there must be one of length less than that. It cannot be exactly $\frac 1{k+1}$ because $u$ is irrational.

For (ii), suppose $[iu,ju]$ is a minimum-length interval.
If $i < j$, then the interval $[0,(j-i)u]$ is the same length and was created earlier.
If $i > j$, then the interval $[(i-j)u,1]$ is the same length and was created earlier. Thus the first time a new minimum-length interval is created, it is created adjacent to either 0 or 1.

For (iii), we proceed by induction. The claim is certainly true after one step. Now consider the first time a new minimum-length interval is created, say at time $k$. Let $[iu,1]$ be the interval adjacent to 1 and $[0,ju]$ the interval adjacent to 0 just before time $k$. Suppose that $[iu,1]$ is the smaller of the two intervals (the other case is symmetric). By the induction hypothesis, $j < i$. By (ii), either $iu < ku < 1$ or $0 < ku < ju$. But if the former, then $ku - iu = (k-i)u$ and $0 < (k-i)u < ju$, a contradiction, since then $[0,(k-i)u]$ would be a smaller interval adjacent to $0$.

By (i)--(iii), every other time $k$ that a new minimum-length interval is created, it is adjacent to $0$ and its length is less than $\frac 1{k+1}$.
\end{proof}

Choose $k\ge r-2$ and $m=\floor{k\log_{r-1} r}$. Note that $\log_{r-1}r$ is irrational: if $\log_{r-1}r = p/q$ then $r^q = (r-1)^p$, which is impossible because $r$ and $r-1$ are relatively prime.
Then $(r-1)^m < r^k$ and $m > k$ for sufficiently large $k$.
We have two representations of $\frac{r^k - 1}{r - 1}$ as a sum:
\begin{align*}
\frac{r^k - 1}{r-1} &= \sum_{i=0}^{k-1} r^i = \sum_{i=0}^{k-1} \binom k{i+1} (r-1)^i.
\end{align*}
Moreover, every integer in the interval $[0,\frac{r^k - 1}{r-1}]$ can be represented by a sum of the form
$\sum_{i=0}^{k-1} a_i (r-1)^i$,
where $0\le a_i\le \binom k{i+1}$. (We might call this a \emph{binomialary representation}.) To see this, let
$t$ be any number less than $\frac{r^k - 1}{r-1}$ with such a representation, say
$ t = \sum_{i=0}^{k-1} a_i (r-1)^i $.
We show that $t+1$ also has such a representation. Let $i$ be the smallest index such that $a_i<\binom k{i+1}$. Then
\begin{align*}
t &= \left(\sum_{j=0}^{i-1} \binom k{j+1} (r-1)^j\right) + \left(\sum_{j=i}^{k-1} a_j (r-1)^j\right) &
1 &= (r-1)^i - \sum_{j=0}^{i-1} (r-2) (r-1)^j.
\end{align*}
Adding these, we have
\begin{align*}
t+1 &= \left(\sum_{j=0}^{i-1} (\binom k{j+1}-r+2) (r-1)^j\right) + (a_i+1)(r-1)^i + \left(\sum_{j=i+1}^{k-1} a_j (r-1)^j\right),
\end{align*}
and this is of the desired form.

It follows that every multiple of $r-1$ in the interval $[0,r^k - 1]$ can be represented by a sum of the form
$\sum_{i=0}^{k} a_i (r-1)^i$ with $0\le a_i\le \binom ki$. In particular, $(r-1)^m$ can be so represented. Thus
\begin{align}
(r-1)^m &= \sum_{i=0}^{k} a_i (r-1)^i,\label{eq:r1mrep}
\end{align}
where $0\le a_i\le \binom ki$. 

Pick $k>\ln(r-1)-1$, which ensures that $0<\ln(r-1)/(k+1)<1$, and also large enough that
\begin{align}
k\log_{r-1}r - \floor{k\log_{r-1}r} &< \frac 1{k+1},\label{eq:r1mrep2}
\end{align}
which is possible by Lemma \ref{lem:dense}. Using the fact that $\ln x \le x-1$ for all $x > 0$,
\begin{align*}
k\log_{r-1}r - m &= k\log_{r-1}r - \floor{k\log_{r-1}r}
< \frac 1{k+1} = \frac{\log_{r-1}e\cdot\ln(r-1)}{k+1}\\
&\le -\log_{r-1}e\cdot\ln(1 - \frac{\ln(r-1)}{k+1})
= -\log_{r-1}(1 - \frac{\ln(r-1)}{k+1}).
\end{align*}
Rearranging terms and exponentiating, we obtain
\begin{align}
\frac{(r-1)^m}{r^{k}} &\ge 1 - \frac{\ln(r-1)}{k+1} = 1 - \Theta(k^{-1}).\label{eq:r1mrep3}
\end{align}
From \eqref{eq:r1mrep}, we have $1 = \sum_{i=0}^{k} a_i (r-1)^{i-m}$.
Thus we can find an exhaustive prefix code $A$ over the $(r-1)$-ary target alphabet with exactly $a_i$ words of length $m-i$, $0\le i\le k$. Assign a distinct word over the binary source alphabet of length $k$ and probability $(r-1)^i/r^k$ to each codeword of length $m-i$ so that the mapping from input words to codewords is injective. There are enough input words to do this, as we need $a_i$ input words of probability $(r-1)^i/r^k$, and there are $\binom ki\ge a_i$ such input words in all.

There are $(r-1)^m$ words over the source alphabet with a target word assigned to them. If one of these source words comes up in the protocol, output its associated target word. For the remaining $r^k - (r-1)^m$ source words, do not output anything. This is lost entropy.

To argue that the output distribution is uniform, we first show that for every prefix $x$ of a codeword in $A$, $x$ appears as a prefix of an emitted codeword with probability $(r-1)^{m-\len x}/r^k$.

We proceed by reverse induction on $\len x$. The claim is true for codewords $x\in A$ by construction. Since $A$ is exhaustive, for every proper prefix $x$ of a codeword and every letter $c$, $xc$ is also a prefix of a codeword. Each such $xc$ is emitted as a prefix with probability $(r-1)^{m-\len{xc}}/r^k$ by the induction hypothesis, and these events are disjoint, therefore $x$ is emitted as a prefix with probability
\begin{align*}
\sum_c \frac{(r-1)^{m-\len{xc}}}{r^k}
\ =\ (r-1)\frac{(r-1)^{m-\len{x}-1}}{r^k}
\ =\ \frac{(r-1)^{m-\len{x}}}{r^k}.
\end{align*}

It follows that every letter $c$ appears as the $n^{\mathrm th}$ letter of an emitted codeword with the same probability $\len{A_{n-1}}\cdot(r-1)^{m-n}/r^k$, where $A_{n-1}$ is the set of length-$(n-1)$ proper prefixes of target codewords, therefore the distribution is uniform.

To calculate the efficiency, by elementary combinatorics, we have
\begin{align*}
\sum_{i=0}^{k} ia_i (r-1)^ir^{-k} &\le \sum_{i=0}^{k} i\binom ki (r-1)^ir^{-k} = k\frac{r-1}r.
\end{align*}
Using \eqref{eq:r1mrep3}, the expected number of target symbols produced is
\begin{align*}
\sum_{i=0}^{k} (m-i)a_i (r-1)^ir^{-k} 
&= \sum_{i=0}^{k} ma_i (r-1)^ir^{-k} - \sum_{i=0}^{k} ia_i (r-1)^ir^{-k}\\
&\ge m(r-1)^mr^{-k} - k\frac{r-1}r\\
&\ge m(1-\Theta(k^{-1})) - k\frac{r-1}r
= m - k\frac{r-1}r - \Theta(1),
\end{align*}
as $m$ is $\Theta(k)$. The number of source symbols consumed is $k$.

The information-theoretic bound on the production/consumption ratio is the quotient of the source and target entropies:
\begin{align*}
\frac{\frac 1r\log r + \frac{r-1}r\log\frac{r}{r-1}}{\log(r-1)} &= \log_{r-1} r - \frac{r-1}r.
\end{align*}
We also have
\begin{align*}
\frac mk &= \frac{\floor{k\log_{r-1}r}}k > \frac{{k\log_{r-1}r} - 1}k > \log_{r-1}r - \Theta(k^{-1}).
\end{align*}
The production/consumption ratio is thus
\begin{align*}
{\frac mk - \frac{r-1}r - \Theta(k^{-1})} 
&> \log_{r-1}r - \frac{r-1}r - \Theta(k^{-1}),
\end{align*}
which is within $\Theta(k^{-1})$ of optimal.

\begin{example}
Consider the case $r=k=4$ in which the input alphabet is $u,v$ with probabilities $1/4$ and $3/4$ respectively, and the output distribution is uniform on the ternary alphabet $0,1,2$. Then $\log_{r-1}r=\log_3 4\approx 1.26$ and $m = \floor{4\log_3 4} = 5$. The conditions for applying the protocol are satisfied: $m > k$, $4 > \ln 3 - 1$, and as required by \eqref{eq:r1mrep2}, $4\log_34 - \floor{4\log_34} \approx .05 < 1/4$.

As guaranteed by \eqref{eq:r1mrep}, we can write $3^5 = 243$ as $\sum_{i=0}^4 a_i3^i$ with $0\le a_i\le \binom 4i$. The coefficients $a_0=a_1=0$, $a_2=6$, $a_3=4$, and $a_4=1$ do the trick. (The representation is not unique; the coefficients 0, 3, 5, 4, 1 will work as well.)
We now select an exhaustive prefix code over the ternary output alphabet with exactly $a_i$ codewords of length $5-i$. The code
\begin{align*}
& 0 && 10 && 11 && 12 && 20 && 210 && 211 && 212 && 220 && 221 && 222
\end{align*}
does it. Now we assign to each codeword of length $5-i$ a distinct input word of length $4$ and probability $3^i/4^k = (3/4)^i(1/4)^{k-i}$. We can assign them 
\begin{align*}
& v^4 && uv^3 && vuv^2 && v^2uv && v^3u && u^2v^2 && uvuv && uv^2u && vu^2v && vuvu && v^2u^2
\end{align*}
respectively.
The following diagram shows the output words and the probabilities with which they are emitted:
\begin{align*}
\begin{tikzpicture}[-, node distance=7mm, auto]
\tikzstyle{node}=[circle, inner sep=1.6pt, fill=black]
\small
  \node[node] (e) {};
  \node[node, below of=e, xshift=-2mm] (1) {};
  \node[node, left of=1, node distance=15mm] (0) {};
  \node[node, right of=1, node distance=21mm] (2) {};
  \node[node, below of=1, xshift=-1mm] (11) {};
  \node[node, left of=11, node distance=8mm] (10) {};
  \node[node, right of=11, node distance=8mm] (12) {};
  \node[node, below of=2, xshift=6mm] (21) {};
  \node[node, left of=21, node distance=8mm] (20) {};
  \node[node, right of=21, node distance=14mm] (22) {};
  \node[node, below of=21, xshift=0mm] (211) {};
  \node[node, left of=211, node distance=5mm] (210) {};
  \node[node, right of=211, node distance=5mm] (212) {};
  \node[node, below of=22, xshift=1mm] (221) {};
  \node[node, left of=221, node distance=5mm] (220) {};
  \node[node, right of=221, node distance=5mm] (222) {};
  \node[below of=0, node distance=12pt] (a0) {{$\frac{3^4}{4^4}$}};
  \node[below of=10, node distance=12pt] (a10) {{$\frac{3^3}{4^4}$}};
  \node[below of=11, node distance=12pt] (a11) {{$\frac{3^3}{4^4}$}};
  \node[below of=12, node distance=12pt] (a12) {{$\frac{3^3}{4^4}$}};
  \node[below of=20, node distance=12pt,xshift=-2pt] (a20) {{$\frac{3^3}{4^4}$}};
  \node[below of=210, node distance=12pt] (a210) {{$\frac{3^2}{4^4}$}};
  \node[below of=211, node distance=12pt] (a211) {{$\frac{3^2}{4^4}$}};
  \node[below of=212, node distance=12pt] (a212) {{$\frac{3^2}{4^4}$}};
  \node[below of=220, node distance=12pt] (a220) {{$\frac{3^2}{4^4}$}};
  \node[below of=221, node distance=12pt] (a221) {{$\frac{3^2}{4^4}$}};
  \node[below of=222, node distance=12pt] (a222) {{$\frac{3^2}{4^4}$}};
  \path (e) edge node[swap, xshift=-7pt, yshift=-6pt] {\scriptsize{$0$}} (0);
  \path (e) edge node[xshift=-1pt, yshift=6pt] {\scriptsize{$1$}} (1);
  \path (e) edge node[xshift=10pt, yshift=-6pt] {\scriptsize{$2$}} (2);
  \path (1) edge node[swap, xshift=-3pt, yshift=-6pt] {\scriptsize{$0$}} (10);
  \path (1) edge node[swap, xshift=1pt, yshift=-6pt] {\scriptsize{$1$}} (11);
  \path (1) edge node[xshift=3pt, yshift=-6pt] {\scriptsize{$2$}} (12);
  \path (2) edge node[swap, xshift=1pt, yshift=-6pt] {\scriptsize{$0$}} (20);
  \path (2) edge node[xshift=1pt, yshift=-6pt] {\scriptsize{$1$}} (21);
  \path (2) edge node[xshift=10pt, yshift=-6pt] {\scriptsize{$2$}} (22);
  \path (21) edge node[swap, xshift=-1pt, yshift=-6pt] {\scriptsize{$0$}} (210);
  \path (21) edge node[swap, xshift=2pt, yshift=0pt] {\scriptsize{$1$}} (211);
  \path (21) edge node[xshift=1pt, yshift=-6pt] {\scriptsize{$2$}} (212);
  \path (22) edge node[swap, xshift=-1pt, yshift=-6pt] {\scriptsize{$0$}} (220);
  \path (22) edge node[xshift=-2pt, yshift=-6pt] {\scriptsize{$1$}} (221);
  \path (22) edge node[xshift=2pt, yshift=-6pt] {\scriptsize{$2$}} (222);
\end{tikzpicture}
\end{align*}
Note that $0$, $1$, and $2$ are each emitted as the $n^{\mathrm{th}}$ letter with equal probability, so the distribution is uniform. For $n=1$, the probabilities are
\begin{align*}
& \textstyle \frac{3^4}{4^4}\ =\ 3\cdot\frac{3^3}{4^4}\ =\ 1\cdot\frac{3^3}{4^4} + 6\cdot\frac{3^2}{4^4}\ .
\end{align*}
Likewise, for $n=2$ and $n=3$, the probabilities are
\begin{align*}
& \textstyle 2\cdot\frac{3^4}{4^4}\ =\ \frac{3^3}{4^4} + 3 \cdot \frac{3^2}{4^4}
&& \textstyle 2\cdot \frac{3^2}{4^4}\ ,
\end{align*}
respectively.

The entropy consumption is
\begin{align*}
& \textstyle 4(-\frac 14\log\frac 14 - \frac 34\log \frac 34)\ =\ 8 - 3\log 3\ \approx\ 3.25
\end{align*}
and the production is
\begin{align*}
& \textstyle (1\cdot1\cdot\frac{3^4}{4^4} + 4\cdot 2\cdot\frac{3^3}{4^4} + 6\cdot 3\cdot\frac{3^2}{4^4})\log 3
\ =\ \frac{459}{256}\log 3
\ \approx\ 2.84
\end{align*}
for an efficiency of $2.84/3.25 \approx 0.87$. The alternative coefficients 0, 3, 5, 4, 1 give slightly better production of $\frac{486}{256}\log 3 \approx 3.01$ for the same consumption, yielding an improved efficiency of $3.01/3.25 \approx 0.93$.
\end{example}

\section{Conclusion}

We have introduced a coalgebraic model for constructing and reasoning about state-based protocols that implement entropy-conserving reductions between random processes. We have provided basic tools that allow efficient protocols to be constructed in a compositional way and analyzed in terms of the tradeoff between state and loss of entropy. We have illustrated the use of the model in various reductions.

An intriguing open problem is to improve the loss of the protocol of \S\ref{sec:arb2unif} to $\Theta(1/k)$. Partial progress has been made in \S\ref{sec:coin2unif}, but we were not able to generalize this approach.

\subsection{Discussion: The Case for Coalgebra}

What are the benefits of a coalgebraic view? Many constructions in the information theory literature are expressed in terms of trees; e.g.~\cite{Hirschler17,Boecherer14}. Here we have defined protocols as coalgebras $(S,\dd)$, where $\dd:S\times\Sigma\to S\times\Gs$, a form of Mealy automata. These are not trees in general. However, the class admits a final coalgebra
$D:(\Gs)^{\Sp}\times\Sigma\to(\Gs)^{\Sp}\times\Gs$,
where
\begin{align*}
D(f,a) &= (f\rest a,f(a)) & f\rest a(x) &= f(ax),\ a\in\Sigma,\ x\in\Sp.
\end{align*}
Here the extension to streams $D^\omega:(\Gs)^{\Sp}\times\So\pfun\Go$ takes the simpler form
\begin{align*}
D^\omega(f,a\alpha)
= {f(a)\cdot D^\omega(f\rest a,\alpha)}.
\end{align*}
A state $f:\Sp\to\Gs$ can be viewed as a labeled tree with nodes $\Ss$ and edge labels $\Gs$. The nodes $xa$ are the children of $x$ for $x\in\Ss$ and $a\in\Sigma$. The label on the edge $(x,xa)$ is $f(xa)$. The tree $f\rest x$ is the subtree rooted at $x\in\Ss$, where $f\rest x(y) = f(xy)$. For any coalgebra $(S,\dd)$, there is a unique coalgebra morphism $h:(S,\dd)\to((\Gs)^{\Sp},D)$ defined coinductively by
\begin{align*}
(h(s)\rest a,h(s)(a)) &= \letin{(t,z)}{\dd(s,a)}{(h(t),z)},
\end{align*}
where $s\in S$ and $a\in\Sigma$; equivalently,
\begin{align*}
h(s)(a) &= \snd (\delta(s,a)) & h(s)(ax) &= h(\fst (\delta(s,a)))(x),
\end{align*}
where $\fst$ and $\snd$ denote the projections onto the first and second components, respectively.

The coalgebraic view allows arbitrary protocols to inherit structure from the final coalgebra under $h^{-1}$, thereby providing a mechanism for transferring results on trees, such as entropy rate, to results on state transition systems.

There are other advantages as well. In this paper we have considered only \emph{homogeneous} measures on $\So$ and $\Go$, that is, those induced by \iid\ processes in which the probabilistic choices are independent and identically distributed, for finite $\Sigma$ and $\Gamma$. However, the coalgebraic definitions of protocol and reduction make sense even if $\Sigma$ and $\Gamma$ are countably infinite and even if the measures are non-homogeneous.

We have observed that a fixed measure $\mu$ on $\Sigma$ induces a unique homogeneous measure, also called $\mu$, on $\So$. But in the final coalgebra, we can go the other direction: For an arbitrary probability measure $\mu$ on $\So$ and state $f:\Sp\to\Gs$, there is a unique assignment of transition probabilities on $\Sp$ compatible with $\mu$, namely the conditional probability
\begin{align*}
f(xa) &= \frac{\mu(\set{\alpha}{xa\pref\alpha})}{\mu(\set{\alpha}{x\pref\alpha})}, 
\end{align*}
or 0 if the denominator is 0.
This determines the probabilistic behavior of the final coalgebra as a protocol starting in state $f$ when the input stream is distributed as $\mu$. This behavior would also be reflected in any protocol $(S,\dd)$ starting in any state $s\in h^{-1}(f)$ under the same measure on input streams, thus providing a semantics for $(S,\dd)$ even under non-homogeneous conditions.

In addition, as in Lemma \ref{lem:delta}\eqref{lem:deltaiii}, any measure $\mu$ on $\So$ induces a push-forward measure $\mu\circ(D^\omega)^{-1}$ on $\Go$. This gives a notion of reduction even in the non-homogeneous case. Thus we can lift the entire theory to Mealy automata that operate probabilistically relative to an arbitrary measure $\mu$ on $\So$. These are essentially discrete Markov transition systems with observations in $\Gs$.

Even more generally, one can envision a continuous-space setting in which the state set $S$ and alphabets $\Sigma$ and $\Gamma$ need not be discrete. The appropriate generalization would give reductions between discrete-time and continuous-space Markov transition systems as defined for example in \cite{Panangaden09,Doberkat07}.

As should be apparent, in this paper we have only scratched the surface of this theory, and there is much left to be done.

\section*{Acknowledgments}

Thanks to Swee Hong Chan, Bobby Kleinberg, Joel Ouaknine, Aaron Wagner for valuable discussions. Thanks to the anonymous referees for several suggestions for improving the presentation. Thanks to the Bellairs Research Institute of McGill University for providing a wonderful research environment. This research was supported by NSF grants CCF-1637532, IIS-1703846, IIS-1718108, and CCF-2008083, ARO grant W911NF-17-1-0592, and a grant from the Open Philanthropy project.


\end{document}